\newsavebox{\measurebox}
\let\subparagraph\paragraph
\newcommand{\tsc}[1]{\textsc{#1}}
\newcommand{\ymark}{\text{\ding{51}}}
\newcommand{\YA}{$\ymark$}
\newcommand{\xmark}{\text{\ding{55}}}
\newcommand{\NO}{$\xmark$}
\newcommand{\inmsg}[1]{%
(\tsc{#1}\checknextarg}
\newcommand{\checknextarg}{\@ifnextchar\bgroup{\gobblenextarg}{)~}}
\newcommand{\gobblenextarg}[1]{, #1\@ifnextchar\bgroup{\gobblenextarg}{)~}}
\newcommand{\bwp}{\theta_\mathtt{P}}
\newcommand{\prover}{\mathtt{P}}
\newtcolorbox[auto counter]{bbox}[2][]{%
    colback=white,
    colframe=black,
    %colbacktitle=white!90!roug,
	colbacktitle=white!90!black,
    coltitle=black,
    fonttitle=\bfseries, 
    enhanced,
    attach boxed title to top left={yshift=-2mm, xshift=0.5cm},%
    #1,% For possible options
}
\newtheorem{theorem}{Theorem}
\algnewcommand{\LeftComment}[1]{\Statex {\color{teal}\textbf{\(\triangleright\) #1}}} 
\renewcommand{\paragraph}[1]{\vspace{2pt} \noindent {\bf #1}}
\begin{document}
%-------------------------------------------------------------------------------

%don't want date printed
\date{}

% make title bold and 14 pt font (Latex default is non-bold, 16 pt)
\title{\Large \bf Proof of Backhaul: 
Trustfree Measurement of Broadband Bandwidth}

\author{
  {\rm Peiyao Sheng}$^{1}$, 
  {\rm Nikita Yadav}$^{1, 2}$, 
  {\rm Vishal Sevani}$^{1}$, 
  {\rm Arun Babu}$^{1}$, \\
  {\rm SVR Anand}$^{1}$, 
  {\rm Himanshu Tyagi}$^{1,2}$, 
  {\rm Pramod Viswanath}$^{1}$\\
  \\
   $^{1}$Kaleidoscope Blockchain Inc., 
   $^{2}$Indian Institute of Science
}

%for single author (just remove % characters)
%\author{
%{\rm Peiyao Sheng}\\
%Kaleidoscope Blockchain Inc.
%\and
%{\rm Nikita Yadav}\\
%Indian Institute of Science
% copy the following lines to add more authors
% \and
% {\rm Vishal Sevani}\\
%Kaleidoscope Blockchain Inc.
%} % end author
\maketitle

\begin{abstract}Recent years have seen the emergence of decentralized wireless networks consisting of nodes hosted by many individuals and small enterprises, reawakening the  decades-old dream of open networking. These networks have been deployed in an organic, distributed manner and are driven by new economic models resting on {\em tokenized}  incentives. A critical requirement for the incentives to scale is the ability to prove network performance in a decentralized ``trustfree" manner, i.e.,  a Byzantine fault tolerant  network telemetry system. 

In this paper, we present a Proof of Backhaul (PoB) protocol which measures the bandwidth of the (broadband) backhaul link of a wireless access point, termed prover, in a decentralized and trustfree manner. In particular, our proposed protocol is the first one to satisfy the following two properties: 
(1) {\it Trustfree.} Bandwidth measurement is secure against Byzantine attacks by collaborations of challenge servers and the prover.
(2) {\it Open.} The barrier-to-entry for being a challenge server is low; there is no requirement of having a low latency and high throughput path to the measured link. 
At a high-level, our protocol aggregates the challenge traffic from multiple challenge servers and uses cryptographic primitives to ensure that a subset of challengers or, even challengers and provers, cannot maliciously modify results in their favor. A formal security model allows us to establish guarantees of accurate bandwidth measurement as a function of the fraction of malicious actors. 

We implement our protocol with challengers spread across geographical locations. Our evaluation shows that our PoB protocol can verify backhaul bandwidth of upto 1000 Mbps with less than 8\% error using measurements lasting only 100 ms. The measurement accuracy is not affected in the presence of corrupted challengers. Importantly, the basic verification protocol lends itself to a minor modification that can measure available bandwidth even in the presence of cross-traffic.
 
Finally, the security guarantees of our PoB protocol output are naturally composable with ``commitments" on blockchain ledgers, which are commonly used for decentralized networks.

\end{abstract}
\section{Introduction}
\label{sec:introduction}
%\todo{1. traffic aggregation + interactive measurement 
%2. discuss prior work that only focus on accuracy \cite{yang2021fast, yang2022mobile} 
%3. discuss cross traffic 
%4. threat model, w/ and w/o rushing attack, shuffle to mitigate information sharing attack
%5. compare UDP / TCP
%6. discuss the scheme without signature
%}

Decentralized networks have been in the making for decades. Starting with 
Software Defined Networking~\cite{kreutz2014software, kim2013improving} to simplify the hardware and open software~\cite{oran}
to facilitate application development, finally 
real-world deployments of decentralized Internet Service Providers (ISPs)~\cite{PMWANI} and decentralized Mobile Network Operators (MNOs)~\cite{Helium} have emerged. These decentralized networks have been made possible by the convergence of several engineering, business, and policy developments: the availability of cheap hardware for WiFi access points, and now even cellular base stations; the availability of 
cloud-native orchestration and AAA software~\cite{magma}; and the availability of lightly licensed spectrum for cellular communication~\cite{fcc-cbrs}. However, the real breakthrough in deployment comes with the emergence of a token-driven incentive ecosystem to bootstrap network growth and make individual hosts provide good network service. The leading exponent of such growth is the Helium network~\cite{Helium}, which is a multi-RAT network supported by hundreds of thousands of ``hotspots'' hosted by individuals.  

But a new engineering challenge has emerged -- we need to design {\em secure and decentralized} network telemetry. 
In centrally managed networks, network telemetry is used for performance measurement and subsequent optimization. In contrast, network telemetry plays a more pivotal role in decentralized networks. It is now needed to ensure that the network nodes provide the service that they are being paid for. 
For this purpose, there are two new requirements for decentralized network telemetry:
\begin{itemize}
    \item {\bf Trustfree.} The protocol is secure against Byzantine attacks by the parties involved.
    
    \item {\bf Open.} The barrier-to-entry for servers participating in decentralized telemetry is low. In particular, any node with a ``reasonably good"  internet connection should be able to participate.
\end{itemize}
 The measurements that we get as the output of such trustfree and open network telemetry protocols 
 can be viewed as a cryptographically secure proof of appropriate network performance. 
 
In this paper, we focus on measuring a specific network performance parameter which is of central importance in decentralized wireless network deployments. In such deployments, users are required to get a broadband connection with appropriate bandwidth, as a backhaul for the wireless access point. But how do we know that the user has indeed set up a good backhaul connection? Can we simply use any of the existing techniques from the large body of literature, spanning over decades, on bottleneck link-throughput measurement? It turns out none of the existing tools is applicable for our setting; below we point out shortcomings of prominent techniques and clarify our contributions.

\paragraph{Comparison with speedtest.}  Speedtest (\url{speedtest.net}) is a state-of-the-art bandwidth testing tool  widely used globally. Whenever a user (the ``prover") sends a measurement request, a nearby server  is  selected from a centralized challenger server pool. The selected server generates traffic continuously until the target link is saturated. This requires the challenger server to have a high bandwidth, low latency, low packet loss link to the prover; this  represents a high barrier to becoming a challenger. Furthermore, the measurements  rely on the rates of sending packets from challengers and acknowledgements from the prover -- an untrustworthy prover or challenger can adversely impact the measurement. Speedtest and similar architectures are unsuited for trustfree network telemetry.

%\paragraph{Can we use packet dispersion techniques?} In packet dispersion~\cite{carter1996measuring},  a challenger sends two (or more) probe packets back-to-back to the prover, who in turn measures the difference in the arrival time of the two packets. The throughput of the bottleneck link is calculated by dividing packet size by this time difference. However, this technique does not satisfy our {\it trustfree} requirement since an untrusted prover can report a reduced time difference to claim greater throughput. This can be fixed  by including unforgeable data in probe packets, as has been proposed in~\cite{karame2012security, zhou2015magic}. But still, the whole dispersion approach fails if the last link is not the bottleneck link. Ensuring that the end challenger has a better broadband connection than the prover will limit the options for challengers to a few powerful servers, making the system more centralized and hence less {\it open}. 

\paragraph{Traffic aggregation.}  One way to allow more challengers to participate in the telemetry (and thus being more open)  is to aggregate traffic from multiple challengers. Such aggregation removes the requirement of high capacity for a single server to measure high-bandwidth links, by uniting a group of servers to generate sufficient traffic in {\em parallel}. While this technique can improve the accuracy  (e.g., recent works \cite{speedtest-multi,fastBTS,yang2021fast}), the method is not trustfree: a Byzantine prover can readily manipulate the measurement results with no check or balance.

\paragraph{Interactive Measurement.} To eliminate the need of trust on the prover, challengers should interact with other parties in the network to generate measurements. Popular interactive telemetry tools such as traceroute~\cite{jacobsonTraceroute} and pathchar~\cite{jacobsonPathchar} use the timing information obtained by combining the Internet control message protocol's (ICMP)  time-to-live (TTL) and  packet dropped messages to estimate link performance over the Internet. In particular, challengers estimate the round-trip time (RTT) to the two end-points of the link to be measured, the throughput is derived by dividing the packet size by the difference of RTT\footnote{The actual protocol is slightly different:  packets of different sizes are transmitted and  linear regression is used to aggregate the measurements, c.f.,~\cite{jacobsonPathchar, downey1999using}.}. Secure measurement, resistant to collusion between the prover and challengers, is not guaranteed in these protocols.

\paragraph{Our contributions.} We present the first protocol 
for measuring backhaul bandwidth that satisfies the aforementioned trustfree and open properties, c.f.,  \S\ref{sec:protocol}. 
Broadly, the protocol is built by implementing the following ideas:
\begin{enumerate}
    \item {\it Traffic aggregation.} We simultaneously send challenge traffic from multiple challengers to the prover. The duration of the challenge is chosen to be sufficiently high to ensure that traffic from all the challengers queues at the prover's backhaul link. 
    
    \item {\it Unforgeable probe.} The challengers are selected randomly from a larger pool and each sends digital signatures as traffic, so that no other party can forge the measurement probe. Furthermore, to limit the influence of any one challenger, we limit the amount of challenge traffic that can come from a single challenger. 
    
    \item {\it Short witness.} The prover can send a short message to the challengers to prove that it has received appropriate amount of data. As will be seen below, our security considerations require us to use a partially verifiable hash. For this purpose, we use a Merkle tree~\cite{merkle1987digital}.
    
    \item {\it Robust timing measurement.} We estimate the RTT for the overall challenge by taking median of the RTT measured by different challengers.
\end{enumerate}
We implement these steps and experimentally validate the design choices to identify the best performing configuration; see Figure~\ref{fig:protocol} for a depiction. The main contribution of this work is the trustfree property of the proposed protocol -- it is secure under a rigorous threat model that we outline in \S\ref{sec:model}. It is interesting to note that, even ignoring the security requirements, our proposed protocol is the first one that can measure bandwidth of hundreds of Mbps without requiring any specialized server with high throughput and low latency for challengers. We further extend this protocol to measure available bandwidth in the presence of cross-traffic, making it a truly distributed ``speedtest.''
%\peiyao{Furthermore? Or would it be better to put main contribution forwards?} 

%\nikita{In Figure \ref{fig:protocol}, a challenger is called a server and a prover a hotspot. Should we change this to avoid any confusion?} 
\begin{figure}[h]
    \centering
    \includegraphics[width=0.45\textwidth]{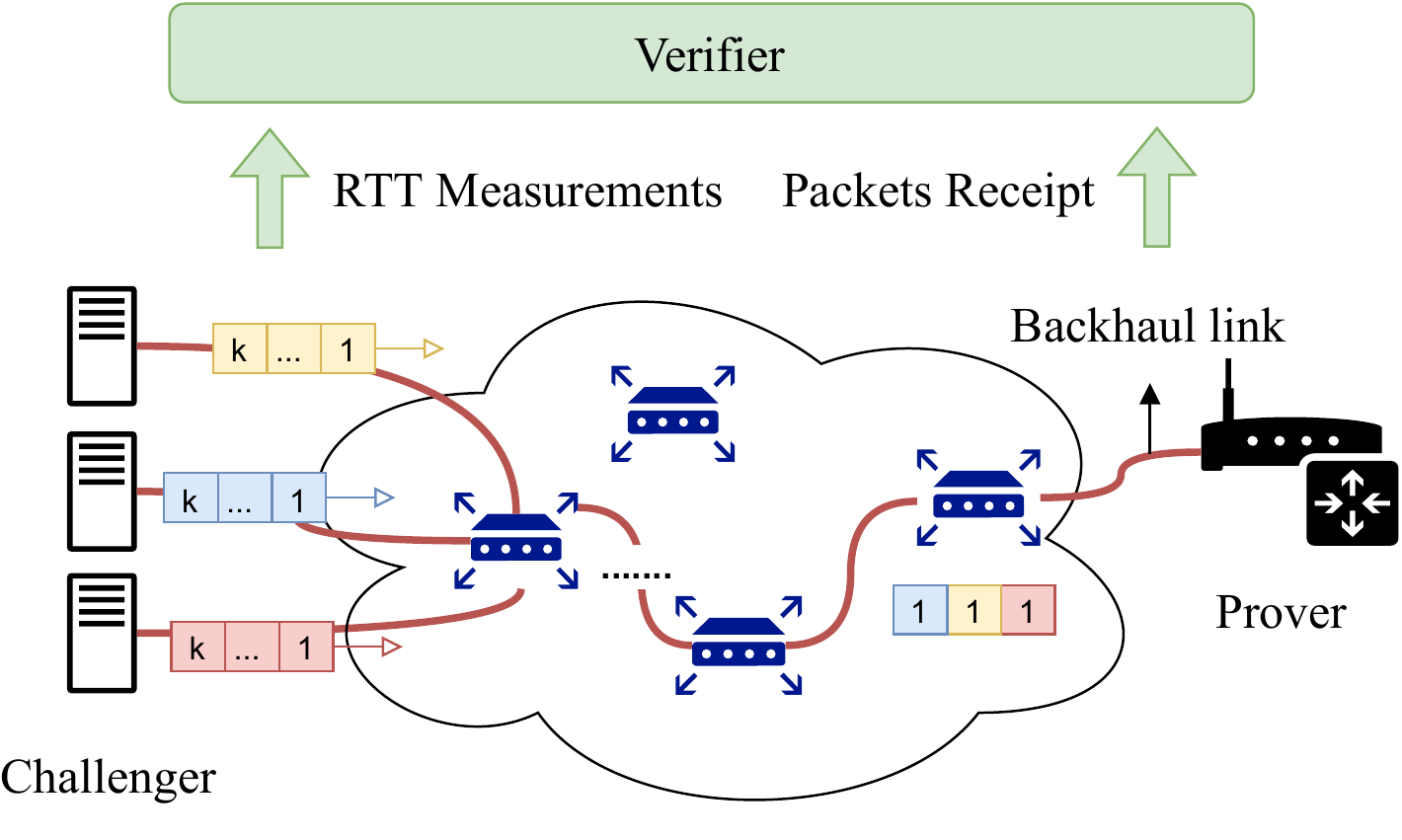}
    \caption{The multichallenger PoB Protocol.}
    \label{fig:protocol}
\end{figure}

% \begin{figure*}
% \centering
% \sbox{\measurebox}{%
%     \begin{minipage}[b]{.48\textwidth}
%     \centering
%     \subfloat
%     []
%     {\label{fig:protocol}\includegraphics[width=\textwidth]{figs/protocol.pdf}}
%     \end{minipage} 
% }
% \usebox{\measurebox} {
% \begin{minipage}[b][\ht\measurebox]{.48\textwidth}
%   \subfloat
%     []
%     %{\label{fig:exp}\includegraphics[width=\textwidth]{example-image-b}}
    
%   \end{minipage}
%   }
% \caption{(a) The multichallenger PoB Protocol. (b) \todo{Figure 1 (b) representing our experimental results. Need to showcase across two dimensions: large scale network, security stress-testing.} }
% \end{figure*} 

\begin{figure*}[t]
\centering
\begin{tabular}{c}
    \begin{tabular}{cccc}
    \hline
    \textbf{Technique}          &  \textbf{Secure} & \textbf{Challenger BW $<$ Backhaul BW} & \textbf{Accuracy} \\
    \hline
    Pathchar \cite{downey1999using, jacobsonPathchar, mah2000pchar}                    &  \NO             &      \YA                             & Low      \\
    Packet dispersion based \cite{carter1996dynamic, carter1996measuring, lai2001nettimer, prasad2003bandwidth}           &  \NO             &      \NO                             &  \textbf{--}      \\
    Secure BW estimation \cite{karame2012security, snader2009eigenspeed, zhou2015magic}                &  \YA             &      \NO                             &  \textbf{--}      \\
    Multichallenger PoB                         &  \YA             &      \YA                             & High      \\
    \hline
    \end{tabular}
    \\
     (a) 
     \\
    \begin{tabular}{cccccc}
         \hline
         \textbf{Backhaul BW} & \textbf{Challenger BW} & \textbf{Challenge Data} & \textbf{Attack} & \textbf{Measured BW} & \textbf{Guaranteed BW} \\ 
         \textbf{(Mbps)} & \textbf{(Mbps)} & \textbf{(MB)} & & \textbf{(Error \%)} &  \textbf{(Mbps)} \\
         \hline
         250 & 25 & 3.44   & \textbf{--}  & 246 (1.6\%)  & 184 \\
         500 & 20 & 6.86   & \textbf{--}  & 475 (5\%)    & 356 \\
         750 & 75 & 10.31   & \textbf{--} & 705 (6\%)    & 529 \\
         1000 & 100 & 13.75 & \textbf{--} & 921 (8\%)    & 691  \\
         250 & 32 & 3.44  & Rushing       & 331 (0.6\%)  & 249  \\
         250 & 32 & 3.44  & Withholding  & 241 (3.6\%)   & 181  \\
         \hline 
    \end{tabular}
     \\
     (b)
\end{tabular}
\caption{(a) Comparison of our multichallenger PoB protocol with prior-art techniques. (b) Summary of our performance results with 10 challengers. We perform attacks with 2 corrupted challengers.}
\label{fig:result_summary}
\end{figure*}

We analyze the security of our multichallenger PoB protocol under a formal threat model
which allows any subset of parties (up to 1/3 challengers collaborating with the  prover) to maliciously deviate from the protocol. 
Since our probe is unforgeable, a corrupted prover still must get probe packets from the challengers. However, corrupted challengers, too, can modify the packet flow using two attacks: (i) the {\it withholding} attack where a corrupted challenger does not  send probe packets; and (ii) the {\it rushing} attack where a corrupted challenger coordinates with the corrupted prover to send the packets or their information quickly without using the challenged link. To compensate for the withholding attack, we must send more packets than the link bandwidth to have sufficient traffic even after withholding attack. To compensate for the rushing attack, we multiply the actual measured bandwidth with a correction factor to arrive at the {\it guaranteed bandwidth}. In addition, corrupted challengers can modify their outputs needed for verification. Specifically, they may report wrong RTT or they may claim modified packet data. We circumvent the former attack by taking a median of the measurements. To circumvent the latter attack, we use a Merkle tree which allows us to verify the consistency of the hash response from the prover with the data of uncorrupted challengers, without requiring correct data from the corrupted ones. 

Overall, denoting the fraction of corrupted challengers by $\beta$, we show that for $\beta<1/3$\footnote{We remark that the adversarial threshold can be $1/2$ if the verifier has access to a timer. %\himanshu{Is is a synchronized clock or one that can measure time of the order of 100ms? I think we can handle bounded sync errors.}. 
See the discussion in \S\ref{sec:security}.}, our protocol does not allow any prover to inflate the bandwidth and allows
an honest prover to establish at least a fraction $(1-2\beta) / (1-\beta)$ of the true bandwidth.

\paragraph{ Implementation and evaluation.} %We first provide a formal, theoretical description of our protocol under ideal conditions to facilitate a careful security analysis. 
To convert the idealized protocol into a practical tool, % we need to address the following key issues: time synchronization, jitter in latency, and computational delays.  
 we implement a variant of our protocol designed to address real-world issues (\S\ref{measurement-technique}) and thoroughly evaluate its performance (\S\ref{sec:experiment}). 
For instance of a real-world challenge,  measuring links with 100 Mbps and higher bandwidth (commonplace in broadband services)  requires  latency measurements with  an accuracy  that is hard to achieve   due to jitter in the Internet;   we elaborate on overcoming this challenge in  \S\ref{subsec:cross-traffic}. 
 
 In our evaluation, we focus on the loss of measurement accuracy when using multiple challengers and traffic aggregation; in particular, we consider
 the loss of accuracy due to:
 (i)  time synchronization errors and network jitter; (ii) computation time delays due to the use of digital signatures, hash computation, verification, and Merkle trees; and (iii) geographically spread challengers with heterogeneous capabilities. We also implement rushing and withholding attacks to illustrate that the security guarantees of the theory hold in practice. 
 Our main experimental results are summarized in
 Figure~\ref{fig:result_summary}. We report both the actual measured bandwidth and the output of our protocol -- the guaranteed bandwidth -- which is obtained by applying the correction factor $(1-2\beta) / (1-\beta)$.% (Algorithm~\ref{alg:pob-verify}). 
 %\himanshu{Explain what is measured BW and guaranteed BW here}
 %%Overall, our performance evaluations can be summarized as follows:\himanshu{Are these over geographically spread servers with varying capabilities? We should describe the scenario in a line or two.}
 %%\begin{itemize}
 %%\item For challenge duration of 100ms, we can measure prover backhaul of 250Mbps with as few challengers as 6-8 with less than 5\% error;
 %%\item For larger prover backhauls of 500-1000Mbps, the measurement error is within 8\%;
 %%\item Total amount of data required to measure prover backhaul of 250Mbps is about 3.5MB, while for backhaul of 1000Mbps, the data required is about 13.8MB;
 %%\item Measurement accuracy is not affected in presence of withholding and rushing attacks. 
 %%\end{itemize}

\section{Background and Related Work}
\label{sec:related_work}
%\vishal{we also need to compare with techniques that measure per-hop capacity~\cite{jacobsonPathchar, downey1999using, mah2000pchar, lai2000measuring, pasztor2002active} that measure per-hop capacity. These are closest to our work and comparison with them is important}
%\himanshu{Please add a discussion}
%\vishal{added}

\paragraph{Bandwidth estimation.} The term bandwidth in the context of data networks quantifies the amount of data a network path can transfer per unit of time. Two metrics related to bandwidth are extensively investigated in the literature,
the maximum possible data rate called \textit{capacity} and the maximum available data rate called \textit{available  bandwidth}~\cite{prasad2003bandwidth}. Packet dispersion techniques~\cite{carter1996measuring, carter1996dynamic, salehin2013packet, keshav1991control, lai2001nettimer} are widely used to measure the capacity of the  bottleneck link in a network path. Some of the techniques to measure available bandwidth are outlined in~\cite{jain2002end, jain2002pathload, ribeiro2003pathchirp, strauss2003measurement, hu2003evaluation, banerjee2000estimating, melander2002regression, melander2000new,speedtest,fastBTS,yang2021fast,yang2022mobile}. Of these, tools such as Pathload~\cite{jain2002end, jain2002pathload} and Pathchirp~\cite{ribeiro2003pathchirp} create a short traffic load with different stream rates and observe the differences of one-way delay to adjust estimations. 
The state-of-the-art commercial tool Speedtest\cite{speedtest} employs a pool of servers with high bandwidth around the world to generate TCP traffic enough to saturate available bandwidth of the target link for a fixed duration. To improve accuracy, Speedtest and  recent work (e.g.,  FastBTS\cite{yang2021fast}) leverage concurrent connections to generate TCP traffic in parallel. Swiftest\cite{yang2022mobile} explores UDP to address limitations incurred by TCP-based methods such as slow start. Since bandwidth measurements play a critical role in optimizing centralized system performance and incentivizing decentralized services, other works shed light on the security of bandwidth measurements such as addressing inflation attacks in packet dispersion~\cite{karame2012security, zhou2015magic}. Secure bandwidth estimation tolerating malicious parties in peer-to-peer networks has been discussed in~\cite{snader2009eigenspeed}, where every participant in the network evaluates the bandwidth of others and the results from all parties are combined into one consensus vector using principle component analysis. This scheme only obtains opportunistic observations during normal operations, and any node with high bandwidth cannot get fully appraised since all the other nodes are constrained by their own bandwidth. In another direction,~\cite{ghosh2014torpath} proposes a proof system for network telemetry  for 
remunerating the relays in Tor network in proportion to the amount of data they transmit. The PoB proposed in this paper is aimed at measuring the backhaul bandwidth of end nodes in the Internet (e.g. WiFi access points and base stations). Further, we place no requirement on the bandwidth of the nodes measuring the backhaul; it can be much less than the backhaul bandwidth.

\paragraph{Per-hop capacity estimation.} Of all the bandwidth estimation techniques in literature,~\cite{jacobsonTraceroute, jacobsonPathchar, downey1999using, mah2000pchar, lai2000measuring, pasztor2002active, harfoush2003measuring} are closest to our work. These techniques can measure capacity for any link in an end-to-end path and so can be used to measure the prover backhaul, which is our goal. Traceroute~\cite{jacobsonTraceroute} and pathchar~\cite{jacobsonPathchar, downey1999using} make use of time-to-live (TTL) information in ICMP packets to control the packet drop at different intermediate hops to measure capacity of any link.~\cite{lai2000measuring, pasztor2002active, harfoush2003measuring} improve the approach used by pathchar~\cite{jacobsonPathchar, downey1999using} with variable packet sizes. However these techniques require precise timing measurements of the order of packet transmission times. For bandwidth in 100s of Mbps, the packet transmission times are of the order of tens of microseconds. Given the jitter in latency over the Internet, our experiments in \S\ref{sec:pchar} reveal that such precise timing measurements are difficult. Indeed,~\cite{harfoush2003measuring} reports errors over 20\%  for measuring bandwidths of 500Mbps or more.

\paragraph{Decentralized networks.} 
A common feature in every decentralized network deployment proposal  is a proof system that can be used to verify a particular network performance parameter. The participants are incentivized to help in this proof system and also stand to gain when they can establish their contribution to this parameter.
Helium~\cite{Helium} intends to unlock the potential of blockchains to establish a decentralized data network based on a tokenized incentive mechanism called proof-of-coverage. Hotspots are compensated for providing reliable coverage, to prove which challenge requests are issued regularly to random hotspots, 
who in turn are required to send beacons to other hotspots in the vicinity. Althea~\cite{Althea} aims to operate as a distributed ISP providing last-mile connectivity by creating a competitive platform and involving individual service providers into the market. Nodes maintain a route meter and accuracy score to assess the quality of neighbors to reach destinations and filter out inaccurate connections. To jointly address contractual and routing difficulties in inter-domain routing, Route Bazaar~\cite{route-bazaar} constructs a system to establish end-to-end connectivity agreements among mutually untrusted parties automatically. The performance of the path is guaranteed by periodically generated forwarding proofs recorded on blockchains, which contain information like encrypted path tags, traffic samples and timing and throughput measurements.

\section{The Multichallenger PoB Protocol}
\label{sec:protocol}
In this section, we formulate the PoB problem (\S\ref{sec:ps}), introduce main techniques (\S\ref{sec:protocol-overview}) and describe our multichallenger PoB protocol in details (\S\ref{sec:full_protocol}) . 

\subsection{Problem Statement}
%\himanshu{Why do we need figure 2? It seems to be subsumed by Figure 1(a).} \peiyao{yes, we can also remove it}
\label{sec:ps}
% \begin{figure}
%%    \centering
 %   \includegraphics[width=0.4\textwidth]{}
 %   \caption{Proof of Backhaul System.}
 %   \label{fig:model}
%\end{figure}

We consider a system consisting of a group of end nodes such as base stations, WiFi access points and remote servers over the Internet willing to assist with backhaul bandwidth measurement.  All nodes are connected to the network core through one backhaul link,
simply referred to as backhaul from hereon, with an internal state $\theta$ representing the bandwidth of the link. We model the network core as a single point since fiber cables usually provide  extremely high bandwidth, e.g.,   100 Gbps. A PoB protocol allows a trusted verifier to use a subset of available nodes 
for securely measuring the backhaul bandwidth  
for a specific node called a {\em prover}, denoted $\prover$. 
The verifier can not observe the internal state $\theta$ of the prover directly. Instead, it needs to interact with the system by issuing a challenge request to the rest of the parties. To mitigate the dependence on (potentially corrupted) static nodes,  we require PoB protocols to randomly select $n$ participants to serve as {\em challengers}, denoted as $\{\mathtt{C}_1, \cdots, \mathtt{C}_n\}$. Suppose $f = \beta n$ challengers are corrupted;  $\beta$ represents the fraction of challengers in the overall pool that is adversarial. 

%\peiyao{among which up to $f = \beta n$ can be corrupted.}

These  challengers are responsible for generating and sending probes to the prover and output the measurements to the verifier. The output of PoB protocol is an estimation of the backhaul bandwidth of the prover. It guarantees the following two security properties:
%\himanshu{I think the property you show in the first lemma, where a cheating prover cannot inflate the bandwidth, should be called "soundness" and the property that an honest prover can establish bandwidth up to some accuracy should be called "approximate completeness." These terms are well defined in intereactive proofs; see wikipedia article on interactive proof. 
%\himanshu{We should also clearly state that we assume that the prover will not try to claim a lower bandwidth. } 
%\peiyao{I add the honest assumption to approximate completeness so that we can keep assuming Byzantine nodes can behave arbitrarily.}
%\himanshu{It seems that we have words corrupted, malicious, and dishonest used to describe the same thing. We can switch to one of them (a mild suggestion).}\peiyao{sure, I'll unify them when making a pass}
\begin{itemize}
    %\item \textbf{Termination:} when the prover is uncorrupted, the protocol finishes with a bandwidth $\bwp'$ within a finite time.
    \item \textbf{Approximate completeness:} When the prover is uncorrupted, if the protocol outputs $\bwp'$, the actual bandwidth of the prover $\bwp$ satisfies $\bwp' \ge \alpha \bwp $ for a constant accuracy ratio $\alpha \in [0, 1]$.
    
    \item \textbf{Soundness:} The protocol will not output a bandwidth higher than $\bwp$, even when the prover is corrupted.
\end{itemize}

\paragraph{Other assumptions for theoretical analysis.} Our protocol makes use of digital signatures and collision resistant cryptographic hash functions. These primitives are assumed to be perfectly secure. %\peiyao{what does perfect here mean?}
%\himanshu{It means we are using a random oracle model where hash functions are assumed to behave like truly random functions. Also, we are assuming that the digital signatures are perfectly secure. }
We assume the network is synchronous and every challenger has access to a synchronized clock. Each node knows the public address and public key of others. We suppose there exists a trusted verifier such as a blockchain to broadcast information to the system.  
We remark that these assumptions are only made for our theoretical analysis; in the evaluation of our implementation we take into account the effect of deviation from these assumptions in practice.

%\paragraph{Universally composable security} The security of our model is defined in a simulation-based world where the protocol is able to emulate the ideal functionality regardless of any adversarial behavior. With this paradigm, all properties guaranteed by the ideal functionality are still captured by the protocols in real world even when the protocols are used as sub-routines (i.e. composable secure).  

\subsection{Protocol Overview} 
\label{sec:protocol-overview}
%\himanshu{Please check if you agree with the overview of the protocol in the introduction; if you do, we need to make this section consistent with that. Also, the  queueing condition is only of theoretical interest since it only effects the performance initially and in case is difficult to enforce in practice. We should not give it such a prominant space for discussion, or at least clarify it is difficult to implement in practice. Finally, I am not sure if we have nailed down all the details for the verifier being a blockchain. For example, is the verifier's program deterministic? I am just leaving quick comments here; will work on the section again tomorrow after you make your changes.}
%\peiyao{Yes, maybe we can point it out in the experiment section that queueing condition does not matter a lot.}
%\himanshu{To do: Discuss in the implementation section that queueing condition is not so important in practice. To discuss: Should we view the verifier as the blockchain?}

Heuristically, the protocol proceeds by randomly selecting a set of $n$ challengers from all the participants to send a train of probes to the prover (Figure~\ref{fig:protocol}). The protocol enforces packets from different challengers to arrive at the link to be measured around the same time. This traffic aggregation strategy effectively combines the group of challengers to an equivalent challenger with larger bandwidth and thereby
renders the prover's backhaul the bottleneck link.
%\tedit{therefore reduces the influence of bottleneck links}{thereby renders the prover's backhaul the bottleneck link.}. 

Formally, suppose that the protocol starts at time $t_0$, and each challenger $\mathtt{C}_i$ starts to send a sequence of $k$ packets of size $b$ each at time $t_{i1}$, $1\leq i \leq n$.
We have the following two requirements: %to make sure that the prover's backhaul becomes
%the bottleneck link and gets sufficient challenge traffic.
%\vishal{for start time we also need to take into account latency from challenger $i$ to the prover, i.e. $l_i$. Likewise latency for response packet from prover to challenger will also need to be accounted in $\Delta_i$. Should we mention it here or in implementation section? }
%\himanshu{In the ideal network assumed here, I think latencies are assumed to be zero. This was mentioned above. But indeed we must open the implementation section by clarifying these differences.}
\begin{enumerate}
\item {\it Aggregation condition.} There is a $\theta_0\le \min(\theta_1, \cdots, \theta_n)$ such that the bandwidths $\theta_i$ of $\mathtt{C}_i$ satisfy 
    \begin{equation}\label{eq:queue}
        t_{0} + \frac{b}{\theta_0} = t_{11} + \frac{b}{\theta_1}  = \cdots = t_{n1} + \frac{b}{\theta_n}.
    \end{equation}
    \item {\it Bandwidth condition.}
    The quantity $\theta_0$ satisfies 
    \begin{equation}
    \label{eq:bw}
        (n-f)\cdot\theta_0 \ge \bwp.
    \end{equation}
\end{enumerate}
The ``aggregation condition" coordinates the arrival time of packets sent from various challengers, allowing multiple traffic flows to be effectively aggregated and merged into one stream at an appropriate rate. In this way, at least $(n-f)b$ bits of data are transmitted 
within the transmission time of one packet for a single challenger ($b / \theta_0$). Therefore, the equivalent bandwidth of the challenger group is enlarged by at least a factor of $(n-f)$. The ``bandwidth condition"  ensures that the prover's backhaul becomes the bottleneck link. 

While honest participants are supposed to correctly report their own bandwidth and send packets on time, corrupted parties can violate the conditions in  arbitrary ways. For instance, a corrupted challenger can {\em rush} the packets through extra links or refuse to send any packets.
We therefore require the prover to send back a response to all challengers on receiving $(n-f) k$ packets as a transmission receipt, since we can not expect more packets in the case of a {\it withholding} attack (detailed in \S\ref{sec:threat}). Then challengers measure the time it takes to transmit all these packets. Since corrupted challengers can claim arbitrary values, the median of all reported time is used to avoid manipulations and provide robust timing measurement.

\paragraph{Cryptographic primitives.} To save the bandwidth used for verification, the prover only sends back a short witness consisting of the hash of received packets to terminate the measurements. We define a hash function $Hash$ that takes any string as input and outputs a deterministic fixed-length random string. When the input is a set of messages, we assume the set will be serialized to a string to compute the hash. For verification, we ask each challenger to verify only packets sent by itself and employ the  Merkle tree construction to enable inclusion check with only partial data. A sequence of hashes can be aggregated using the function $MerkleRoot$ to a single cumulative hash. This technique reduces the verification overhead per challenger to $O(\log n)$. In addition, our protocol uses digital signatures to generate unforgeable probes and ensure traceability of bad behavior, for which the following functions are provided: a key generation function $keyGen$ which outputs a pair of secret and public keys, a signing function $sign(sk, msg)$ that allows anyone to sign an arbitrary message with a secret key $sk$, and a verification function $verify(pk, msg, \sigma)$ that checks whether the signature $\sigma$ is derived by signing given message $msg$ using the secret key paired with the public key $pk$.

\paragraph{Blockchain as a verifier.} Our PoB protocol is triggered by a challenge request issued from a verifier, who is also responsible for the broadcast of public parameters such as protocol start time $t_0$ and bandwidth requirement $\theta_0$. Generally, any trusted entity can play the role of a verifier. In tokenized decentralized settings, smart contracts supported by blockchains are a good fit to transparently generate, broadcast protocol parameters and coordinate measurement reports from multiple challengers. Implementing a version of our protocol with blockchain as a verifier  (deploying  appropriate smart contracts) is beyond the scope of this paper. %\peiyao{mention if smart contract doesn't have timeout logic, the security threshold becomes $1/3$}

%\paragraph{Verification overhead.} The verification overhead our protocol imposes to the verifier and network is minimal. %During the measurements, the amount of traffic is $O(nkb)$ in total, where $k$ and $b$ are determined by \peiyao{How do we choose $k$ and $b$? can we say the total amount of data is fixed and won't scale with $n$?\vishal{Experimentally we have found b = 1514 bytes gives good results. k depends on challenge duration. We carry out expts for different challenge durations and 100ms seems to give good results}}. For verification, we ask each challenger to verify only packets sent by itself and employ Merkle tree to enable inclusion check with only partial data, this technique reduces the verification overhead per challenger to $O(\log n)$. 

\subsection{Full Protocol} 
\label{sec:full_protocol}
The full protocol contains two phases, a {\it measurement phase} described in Algorithm~\ref{alg:pob-measure}, where challengers generate and send packets, and a {\it verification phase} described in Algorithm~\ref{alg:pob-verify}, where the prover constructs proofs for the verifier. 
Finally, the verifier outputs the final results after verification. 

\begin{algorithm}[htbp]
\begin{algorithmic}[1]

\As{a challenger $\mathtt{C}_i$}

\State $t_0, m_0, \theta_0 \gets$ generated and broadcast by verifier

    \State measure its own bandwidth $\theta_i$ (require $\theta_i \ge \theta_0$). 
    
    \State generate $(pk_i, sk_i) \gets keyGen$, send $pk_i$ to verifier.

    \For{sequence number $q = 1, \cdots, k$}
    \State $t_{iq} \gets t_0 + q\cdot b / \theta_0 - b/\theta_i$ \label{alg:time}
        
    \State $\sigma_{iq} \gets sign(sk_i, (q, m_0)), m_{iq} \gets (i, q, \sigma_{iq})$
    
    \State send packet $m_{iq}$ to $P$ at $t_{iq} $
    
    %\State  send $k$ packets $\{m_{iq}\}_{q\in [1,k]}$ back to back to $P$ at $t_i$, where 
    \EndFor
    
%\LeftComment{Wait for responses}

\Upon{$(h_{1i}, h_2, \sigma_i)$ from $P$}
\If{$verify(pk_P, (h_{1i}, h_2),  \sigma_i)$ outputs $1$}
    \State record round trip time $\Delta_i \gets curTime_i - t_i$ 
\EndIf
\EndUpon
\EndAs{}

\As{a prover}

$\forall i \in [1,n], \mathcal{M}[i]\gets \emptyset$, $(pk_p, sk_p) \gets keyGen$

\Upon{packet $M'$ from $\mathtt{C}_i$} 
\State  $(i, q, \sigma) \gets M'$
%\If{no entry $(q, *)$ is in $\mathcal{M}[i]$}
\State add $(q, \sigma)$ to $\mathcal{M}[i]$ 
%\EndIf
\If{$\sum_{i=1}^{n}|\mathcal{M}[i]| = (n-f)k $ }

%\LeftComment{Send response to terminate the measurement phase}

    \State $\forall i\in [1, n], h_{1i} \gets Hash(\mathcal{M}[i])$
    \State $h_2 \gets MerkleRoot(\{h_{1i}\}_{i\in[1,n]})$
    \State $\sigma_i \gets sign(sk_p, (h_{1i}, h_2))$

    \State Send $(h_{1i}, h_2, \sigma_i)$ to $\mathtt{C}_i$ for all $i$.
\EndIf
\EndUpon
\EndAs{}
\end{algorithmic}
\caption{The Measurement Phase of PoB Protocol}
\label{alg:pob-measure}
\end{algorithm}

\begin{figure}[h]
    \centering
    \includegraphics[width=0.5\textwidth]{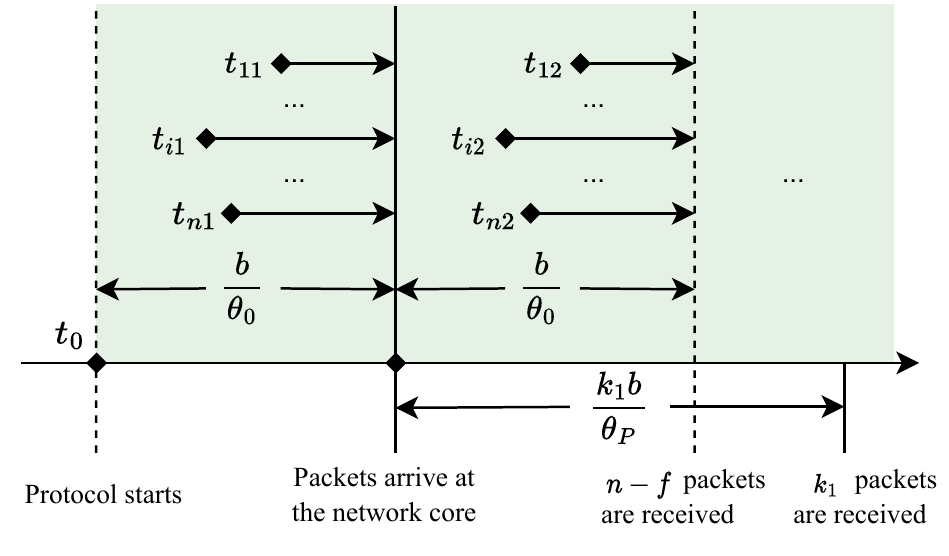}
    \caption{The measurement phase of PoB protocol. $t_{iq}$ is the time for challenger $\mathtt{C}_i$ to send the $q$-the packet. $k_1$ is the actual number of packets received with sequence number $1$.}
    \label{fig:timeline}
\end{figure}

%\begin{figure*}
%    \centering
%    \input{ppob_measure}
%    \caption{The Measurement Phase of Proof of Backhaul Protocol.}
%    \label{fig:ppob-measure}
%\end{figure*}

\paragraph{Measurement phase.} At the beginning of the measurement phase, the verifier produces three public protocol parameters $(t_0, m_0, \theta_0)$ and broadcasts it to all challengers, where $t_0$ is the start time of the protocol, $m_0$ is a random message, $\theta_0$ is the global minimum bandwidth. Challengers are required to measure their own bandwidth $\theta_i$ and set up a key pair $(pk_i, sk_i)$, the public key is sent to the verifier. The time to start sending the first packet $t_{i1}$ is determined by Eq.~\eqref{eq:queue}. The challenger $\mathtt{C}_i$ generates a sequence of $k$ packets by signing the public message $m_0$ together with a sequence number $q$ and sends them one by one to the prover with a fixed duration $b / \theta_0$. The process is depicted in Figure~\ref{fig:timeline}, where the $q$-th packet of $\mathtt{C}_i$ is sent at time $t_{iq}$ (see line~\ref{alg:time} of Algorithm~\ref{alg:pob-measure}). On receiving the packets from the challengers, the prover separates the
messages from different challengers and adds them to corresponding sets.
%and ensures each sequence number is added only once in each challenger's set %\himanshu{What is this condition? How will the prover get repeated sequence numbers?}\peiyao{This condition is used to avoid counting duplicated packets sent by malicious challengers. We can also remove this check since it's easy to detect with irrefutable evidence.}. 
When the total number of received packets reaches $(n-f)k$, the prover generates a response to broadcast to all challengers. This   terminates the measurements phase. The response contains (1) a receipt $h_{1i}$ to each challenger $\mathtt{C}_i$, which is the hash of all packets received from the same sender; and  (2) a Merkle root $h_2$ constructed from all receipts. All challengers record the round trip time $\Delta_i$ between the start time $t_{i1}$ and the time $curTime_{i}$ at which a valid response is received.
%\himanshu{In the queueing condition, wasn't $t_i$ used to denote the start time? Also, we should clearly say that they record the RTT between the start time and the time at which they receive the response.}\peiyao{Yes, the notation is duplicated, I'll get it fixed.}

\paragraph{Verification phase.} 
%\himanshu{It will be better to use different notations for parties, sets, and random variables. 
%Currently, we use the same capital letters to denote challengers, the set of packets, and a proof. Maybe we can use caligraphics for set and mathtype for party names. If it is too much now, we can skip.
%}\peiyao{Yes I can refactor them later. What is the command for mathtype?}
%\himanshu{$\mathtt{this}$}
In the verification phase, the prover is responsible for proving to the challengers the content of received packets. To that end, it constructs another response revealing the indices $B_i$ of packets received from $\mathtt{C}_i$
and showing the inclusion of each receipt in the Merkle tree with a Merkle proof $P_i$. It also sends the Merkle root to the verifier. On receiving the Merkle proof $P_i$ from the prover, the challengers reconstruct the receipt hash and the Merkle root. The 
challenger $\mathtt{C}_i$ forwards $\Delta_i$ and the number of packets sent by it to the verifier, after making sure that both the hashes are consistent. At the end of the second phase, the verifier aggregates these measurements from all the challengers 
about how long the measurement phase takes and how many packets are indeed received by the prover. It also forwards the reports from challengers to the prover, who checks the consistency and submits the packets and the Merkle proof in case disputes exist. Once the verifier has received ``sufficiently many" valid reports  %\tedit{specifically it waits to receive a confirmation of at least $(n-f)k$ overall packets from at least $n-f$ challengers,}{
(specifically it waits to receive a confirmation from at least $n-f$ challengers with at least $(n-f)k$ packets in total), 
%\himanshu{Isn't each challenger verifying only its own packets? Why is the verifier asking the challengers to confirm total number of packets} \peiyao{challengers just need to confirm the number of its own packets, I rewrote the sentence above} 
it calculates the final output bandwidth by dividing 
the number of total received packets 
by the median of the reported RTTs; see line~\ref{alg:output} of Algorithm~\ref{alg:pob-verify}.

\begin{algorithm}[t]
\begin{algorithmic}[1]

\As{a prover}
    \State $B_i \gets$ a bitmap of size $k$ where the $q$-th bit in the bitmap is set if some packet $(q, *) \in \mathcal{M}[i]$
    \State $P_i \gets$ the Merkle proof of $h_{1i}$
    \State send $ (B_i, P_i)$ to $C_i$, output \inmsg{Report}{$h_2$} to verifier.

%\Upon{\inmsg{Report}{$C_i, k_i$}}

%\LeftComment{Send proof when dispute happens}

 %   \If{$k_i \neq |\mathcal{M}[i]|$ }
%        \State  Send \inmsg{Proof}{$C_i, \mathcal{M}[i], P_i$} to verifier
  %  \EndIf
%\EndUpon
\EndAs

\As{a challenger}
\Upon{$(B_i, P_i)$ from $P$}

\State Add all $(q, \sigma_{iq})$ to a set $ \mathcal{M}$ if the $q$-th bit is set to 1 in $B_i$. Check whether $Hash( \mathcal{M}) = h_{1i}$. 

\State Reconstruct the Merkle root $h'$ using $P_i$ and $h_{1i}$. Check whether $h_2 = h'$. 

\State If both two checks are passed, output \inmsg{Report}{$\prover, h_2, \Delta_i, | \mathcal{M}|$} to verifier. %otherwise, output \inmsg{FraudProof}{$\signed{(h_{i1}, h_2)}, B_i, P_i, s_i$}.

\EndUpon
\EndAs

\As{a verifier}

\State $ \mathcal{M}\gets \emptyset, cnt \gets 0$

\Upon{\inmsg{Report}{$h$} from prover $\prover$}

    \State record $h$

\EndUpon

%\Upon{\inmsg{Proof}{$C_i, M_i, P_i$}  from prover $\prover$}

%\State use $Hash(\mathcal{M}_i)$ and $P_i$ to reconstruct Merkle root $h'$, check $h' = h$

%\If{$k_i \neq |\mathcal{M}_i|$ }

%    \State $cnt \gets cnt - k_i + |\mathcal{M}_i|$, blacklist $C_i$ \peiyao{do we need to include these dispute resolving steps in protocol description? since these are not the only cases that disagreements may happen. Can we just add a paragraph to discuss general solutions?}
%\EndIf
%\EndUpon

\Upon{\inmsg{Report}{$\prover, h_2, \Delta_i, k_i$} from challenger $C_i$}

    \State Check $h_2 = h$, add $\Delta_i$ to $\mathcal{M}$, $cnt \gets cnt + k_i$

%    \State Send \inmsg{Report}{$C_i, k_i$} to prover.

    \If {$cnt \ge (n-f)k $ and $|\mathcal{M}| \ge n-f$}

        \State $\Delta'\gets Median(\mathcal{M})$ 

        \State $\bwp' \gets \frac{cnt \cdot b \cdot (n-2f)}{\Delta' \cdot (n-f)}$ \label{alg:output}

        \State Output \inmsg{PoB}{$\prover, \bwp'$}
    
    \EndIf
\EndUpon

\EndAs

\end{algorithmic}
\caption{The Verification Phase of PoB Protocol}
\label{alg:pob-verify}
\end{algorithm}

%\begin{figure*}
%    \centering
%    \input{ppob_verify}
%    \caption{The Verification Phase of Proof of %Backhaul Protocol.}
%    \label{fig:ppob-verify}
%\end{figure*}

\section{Security Model and Analysis}
\label{sec:model}
The primary challenge in trustfree networking is the inherent security vulnerability,  since any party can depart from the protocol at will and even collude with other parties to manipulate the results. In this section, we formalize a broad  threat model underlying measuring bandwidth, systematically examine the security issues to which the system is exposed (\S\ref{sec:threat}), and analyze the security guarantees for our protocol (\S\ref{sec:security}).

\subsection{Threat Model} 
\label{sec:threat}

%\peiyao{Where should we insert the discussion about how our protocol solves the below attacks? Can we integrate the discussion to the same paragraph?}
%\himanshu{I think it should all be inside the lemma proofs, in security analysis. The lemma proofs are ignoring the details of primitives used and how they handle these attacks and simply focussing on the associated calculation of thresholds.}

We consider a static adversary allowed to corrupt at most $f$ among $n$ challengers before the protocol starts, the rest of uncorrupted challengers are referred to as honest. The prover can also be corrupted. In addition to the backhaul link indicated in the model in \S\ref{sec:ps}, we allow the adversary to access external communication channels. Specifically, the adversary has access to additional links with arbitrarily high bandwidth connecting to all the participants.
The corrupted parties can act arbitrarily in order to either inflate or deflate the measured bandwidth; we discuss prominent attacks below.

\paragraph{Withholding attack.} The measurement of bandwidth requires the challengers to send probes and measure the time that takes for the prover to receive the probes. The corrupted challengers who have been  bribed by the consumers or the competitors of a prover might be motivated to deflate the bandwidth estimation to reduce service costs. They can delay the sending of the packets to increase the observed RTT %\himanshu{I think we should stick to RTT throughout; transmission time may have a different connotation} 
or even withhold the packets for the entire protocol. During the verification phase, corrupted challengers can also refuse to report verification results.
Moreover, the prover can also bribe the challengers to withhold packets during the measurement phase but report that the maximum number of packets have been sent in the verification phase.

\paragraph{Rushing attack.} Since a reasonable incentive system will
allow the participants to be compensated in proportion to their bandwidth, provers can collude with challengers to inflate bandwidth to get more rewards. During the measurement phase, instead of the backhaul link which is filled with the packets from uncorrupted challengers, corrupted challengers transmit packets through an extra link with an extremely high bandwidth to finish the measurements within a shorter time.  %\peiyao{malicious challengers can even withhold the packets but still report the largest number, does it belong to withhold attack or rushing attack?}
%\himanshu{I think it will depend on what the challenger reports as the information. It can either strive to increase or decrease the rate. It may even be categorized as information sharing attack. I would suggest that we mention this point in Withholding attack -- basically, attacks discussed here are more about attacks on the data path and part about reporting wrong information to verifier is not discussed here.}

\paragraph{Information sharing attack.} Besides the rushing attack, another way for the prover to get more information about the data from
the challengers than that was transmitted through the backhaul link 
is to exploit the information structure. In the verification phase, to facilitate the verification of whether the packets received by the prover are indeed those sent by the challengers, the challengers are required to provide the information related to packet generation. If the information to generate packets is much smaller than the actual packet data and is shared to the prover directly, the prover can also terminate the measurements much earlier since it can generate a fraction of packets by itself. 
For instance, in our protocol, corrupted challengers can send their secret keys to the prover.

\paragraph{Other attacks.} There are attacks that an {\em individual}  corrupted participant can launch. For example, any corrupted party can generate incorrect signatures, or send duplicate packets. Corrupted challengers can report wrong information during the verification phase. We note that these attacks are easy to detect with irrefutable evidence due to the use of signatures. 
Other common attacks include Denial-of-Service (DoS) attacks (especially if the protocol is implemented using the public IP address of the prover): 
specifically, a challenger with a high bandwidth link can flood the prover backhaul with invalid packets, preventing the valid packets sent by uncorrupted challengers from reaching the prover. 
Even a challenger who has not been selected for a particular challenge, but knows the time of the challenge, can disrupt the challenge similarly. To defend against DoS attacks, we  employ standard filtering techniques \cite{peng2003protection}. Disincentivizing such attacks via their detection (``slashing" in blockchain parlance \cite{slashing}) is the topic of future work. 

%\himanshu{Do we mean maliciously modified or placed bytes}\peiyao{Yes, trivially generate incorrect data such as Merkle proof or hash} in probes or responses. 
%When measuring their own bandwidth and RTT, corrupted challengers can report any value.
%\himanshu{Where do we require the challengers to measure their own bandwidth?} \peiyao{They need their own bandwidth to determine start time, also $\theta_0$ should be chosen according to minimum bandwidth}

%\peiyao{do we need to mention other attacks that are not addressed here? e.g. DoS attack}
%\himanshu{We should discuss the DoS attack and clearly mention that we don't address it since
%it is unavoidable even by those outside the system. We should say that one can handle it using standard filtering processes and will be addressed in future work.
%}
%\item[(b)] Adversary have access to channels connecting to the access point, with arbitrarily high bandwidth, referred to as weakly rushing adversary. 

%\paragraph{Goal} 

\subsection{Security Properties}
\label{sec:security}
%We now prove the security properties of our multichallenger PoB protocol. 

\begin{theorem}[Soundness]
\label{thm:soundness}
 When $f < n / 3$, the prover cannot inflate the bandwidth.
\end{theorem}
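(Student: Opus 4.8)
The plan is to bound the verifier's output $\bwp' = \frac{cnt\cdot b\cdot(n-2f)}{\Delta'\cdot(n-f)}$ from line~\ref{alg:output} directly and show it never exceeds the true backhaul bandwidth $\bwp$. Everything reduces to two inequalities that are then combined: a lower bound $\Delta' \ge (n-2f)kb/\bwp$ on the median reported round-trip time, and an upper bound $cnt \le (n-f)k$ on the counted packets. Substituting both into the formula gives $\bwp' \le \frac{(n-f)k\cdot b\cdot(n-2f)}{\big[(n-2f)kb/\bwp\big]\cdot(n-f)} = \bwp$, which is exactly the soundness property. The constants here are tight, so the $(n-2f)/(n-f)$ correction factor carries real weight: without it the same chain would only yield $\bwp' \le \frac{n-f}{n-2f}\,\bwp$, an inflated estimate.

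For the counting bound I would argue that corrupted challengers contribute at most $fk$ to $cnt$ while honest challengers contribute only packets the prover genuinely received. Each challenger signs exactly $k$ messages (sequence numbers $1,\dots,k$) and the bitmap $B_i$ has size $k$, so every reporter's count $k_i$ satisfies $k_i\le k$; by unforgeability the prover cannot fabricate a packet attributed to an honest $\mathtt{C}_i$, and since $\mathtt{C}_i$ verifies the receipt by recomputing $Hash$ over its \emph{own} signatures, collision resistance forces the revealed $B_i$ for an honest $\mathtt{C}_i$ to mark exactly the sequence numbers the prover actually received. Because the verifier emits \inmsg{PoB} the instant $cnt$ first reaches $(n-f)k$, we get $cnt \le (n-f)k$ (a possible lower-order over-count from part of one report perturbs the final bound only in a vanishing term and can be removed by capping $cnt$ at $(n-f)k$).

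The timing bound is the crux. Since the verifier outputs only when $cnt \ge (n-f)k$ and corrupted challengers account for at most $fk$ of $cnt$, at least $(n-2f)k$ packets, i.e.\ $(n-2f)kb$ bits, must have reached the prover from honest challengers before it sent the measurement-phase response (by collision resistance the committed receipts can only be opened to packets the prover had genuinely received at commit time, so committing early does not help). Honest challengers transmit solely over the backhaul, since the adversary's auxiliary high-bandwidth links are by the threat model unavailable to honest parties, and by the aggregation condition~\eqref{eq:queue} the first packet of every honest $\mathtt{C}_i$ only finishes leaving its access link at the common time $t_0+b/\theta_0$, so no honest bit enters the backhaul queue before then. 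As the backhaul is the bottleneck with capacity $\bwp$, pushing $(n-2f)kb$ honest bits through it takes at least $(n-2f)kb/\bwp$, so the prover's response, and hence every honest reporter's $curTime_i$, occurs no earlier than $t_0+b/\theta_0+(n-2f)kb/\bwp$; with $t_{i1}=t_0+b/\theta_0-b/\theta_i$ this gives $\Delta_i = curTime_i - t_{i1} \ge (n-2f)kb/\bwp$ for every honest reporter.

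It remains to pass from the honest reporters to the median. Among the $m := |\mathcal{M}| \ge n-f$ reported RTTs at most $f$ come from corrupted challengers, and the hypothesis $f<n/3$ yields $m \ge n-f > 2f$, so strictly fewer than half the entries of $\mathcal{M}$ lie below $(n-2f)kb/\bwp$; hence $\Delta' = Median(\mathcal{M}) \ge (n-2f)kb/\bwp$. Combining this with $cnt\le(n-f)k$ in the output formula completes the proof. The bound $f<n/3$ is used exactly here, to make honest entries a strict majority of $\mathcal{M}$, together with the implicit need for $n-2f>0$ so the output is well-defined; and the one genuinely delicate point is the timing bound, where the bottleneck assumption, the aggregation condition, and cryptographic unforgeability/binding must all be invoked simultaneously to lower-bound the honest challengers' measured RTTs.
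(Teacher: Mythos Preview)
Your argument is correct and follows the paper's line exactly: unforgeability forces the prover to receive at least $cnt - fk$ honest packets over the backhaul before it can commit a valid response, the bottleneck capacity $\bwp$ turns this into a lower bound on every honest $\Delta_i$, the hypothesis $f<n/3$ makes honest reports a strict majority of $\mathcal{M}$ so the median inherits that bound, and the $(n-2f)/(n-f)$ factor closes the inequality. The one place the paper is slightly cleaner is the final arithmetic: rather than decoupling into $cnt \le (n-f)k$ and $\Delta' \ge (n-2f)kb/\bwp$ (which, as you note, leaves an overshoot to handwave away), the paper keeps $K := cnt$ as a free variable, proves $\Delta' > (K-fk)b/\bwp$, and then uses only $K \ge (n-f)k$ together with the monotonicity of $K/(K-fk)$ to get $K/(K-fk) \le (n-f)/(n-2f)$, giving $\bwp' \le \bwp$ exactly with no lower-order caveat.
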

\begin{proof}
According to the protocol, all packets with sequence number $q$ sent by uncorrupted challengers will arrive at the network core at $t_0 + q \cdot b / \theta_0$ and be added to the backhaul link queue $Q$. Because it takes at least $b / \theta_0$ to finish transmitting all packets with the same sequence number, the queue will never be empty during the measurement phase. Before sending the response, the prover waits for $K \ge (n-f)\cdot k$ packets, among which at most $fk$ packets come from corrupted challengers. These packets can be sent through an external link (rushing attack) or generated by prover directly if the secret keys are shared in collusion (information sharing attack). In either case they will not actually consume the bandwidth of the prover's backhaul link. Even so, there are still at least $K - fk \ge (n-2f)\cdot k$ packets sent by uncorrupted challengers. Since packets from uncorrupted challengers are not forgeable by anyone else, the earliest time at which the prover can send response is the time at which $(K - fk)$ packets from $Q$ get delivered, which is at least $t_R = t_0 + b / \theta_0 + (K - fk)b / \bwp$,
whereby the uncorrupted challengers will receive the response and at time $t_{R}^i > t_R$.

Since the verifier needs to collect at least $n-f$ time measurements, of which $n - 2f$ must be reported by uncorrupted challengers, the median $\Delta'$ of the RTTs
must be bounded by the minimum of honest measurements since $f<n/3$, in this way the estimated time will not get affected by individual misreports. Then, denoting the set of honest challengers as $H$, we have
$\Delta' \ge \min\{t_{R}^i\}_{i\in H}-t_0-b/\theta_0 > (K-fk)b/\bwp$ and  
\begin{align*}
    \bwp' &= \frac{K \cdot b \cdot (n-2f)}{\Delta' \cdot (n-f)} \le \frac{K  \cdot (n-2f)\cdot \bwp}{ (K - fk)  \cdot (n-f)}\le \bwp.
\end{align*}
\end{proof}

\begin{theorem}[Approximate completeness] When $f < n / 3$ and the prover is uncorrupted, the protocol will always output bandwidth with accuracy $\alpha = \bwp' / \bwp \ge (n-2f) / (n-f)$.

\end{theorem}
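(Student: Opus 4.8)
The plan is to run the soundness argument (Theorem~\ref{thm:soundness}) in reverse. There we upper-bounded the output; here we want a lower bound, so we need an \emph{upper} bound on the median RTT $\Delta'$ together with a \emph{lower} bound on the acknowledged packet count $cnt$. Expanding the output rule on line~\ref{alg:output}, the claim $\bwp' \ge \frac{n-2f}{n-f}\bwp$ is equivalent to $cnt\cdot b \ge \Delta'\cdot\bwp$ — informally, the volume the prover acknowledges is at least what its backhaul could have carried during the measured interval. So it suffices to establish (i) $cnt \ge (n-f)k$ and (ii) $\Delta' \le (n-f)k\,b/\bwp$.

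For (i), note the verifier only outputs once $cnt \ge (n-f)k$, so what is really needed is that this threshold is attained when $\prover$ is honest: the $n-f$ honest challengers each transmit all $k$ of their packets on the schedule of line~\ref{alg:time}, and in the synchronous, loss-free model (and absent an adversarial flooding of the backhaul, excluded by the threat model) these traverse the backhaul, are received by the honest $\prover$, and are reflected in the bitmaps it returns, so the honest challengers' reports alone account for $(n-f)k$ packets and the verifier halts.

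For (ii), the aggregation condition~\eqref{eq:queue} makes the sequence-$q$ packets of the honest challengers reach the network core simultaneously at $t_0 + q\,b/\theta_0$, and the bandwidth condition~\eqref{eq:bw} guarantees that the backhaul queue, once non-empty at $t_0 + b/\theta_0$, stays busy at rate $\bwp$ until all offered data is served — the same queueing fact used in the proof of Theorem~\ref{thm:soundness}. Hence $\prover$ has received $(n-f)k$ packets, i.e.\ has drained at most $(n-f)k\,b$ bits at rate $\bwp$, no later than $t_0 + b/\theta_0 + (n-f)k\,b/\bwp$, at which time the honest $\prover$ broadcasts its response; packets injected by corrupted challengers, whether through the backhaul or rushed over an external link, can only make $\prover$ cross the threshold earlier, so this bound is robust (withholding, where it is tight, is the worst case). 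Thus every honest challenger records $\Delta_i \le (n-f)k\,b/\bwp$, treating the short response's transmission and propagation as negligible, as in the idealized model and in the soundness proof. Finally $f < n/3$ gives $n-2f > f$, so among the $\ge n-f$ reports the verifier aggregates strictly more than half come from honest challengers, and since all their values satisfy the bound so does their median $\Delta'$. Combining, $cnt\cdot b \ge (n-f)k\,b \ge \Delta'\cdot\bwp$, which rearranges to $\bwp' \ge \frac{n-2f}{n-f}\bwp$, i.e.\ $\alpha \ge (n-2f)/(n-f)$.

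The step I expect to be hardest is (ii): one must rule out \emph{any} behaviour of the corrupted challengers — short of the out-of-scope DoS/flooding attack that would genuinely congest the backhaul — that could push the honest prover's response time past $t_0 + b/\theta_0 + (n-f)k\,b/\bwp$. The crux is that the $(n-f)k$-th packet emerging from the backhaul follows the drainage of at most $(n-f)k\,b$ bits at rate $\bwp$ no matter which challengers those packets came from, and that rushed or externally delivered packets only advance the threshold. A secondary point, handled exactly as in Theorem~\ref{thm:soundness}, is that the corrupted challengers cannot both stall the verifier's $cnt$ threshold and skew the median — this is where $f<n/3$ enters.
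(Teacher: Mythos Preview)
Your proposal is correct and follows essentially the same approach as the paper: bound the honest prover's response time by $t_0 + b/\theta_0 + (n-f)kb/\bwp$ (withholding being the worst case), deduce $\Delta' \le (n-f)kb/\bwp$ via the honest majority among the $\ge n-f$ reports, and combine with $cnt \ge (n-f)k$ in the output formula. You are in fact more explicit than the paper about why $f<n/3$ forces the median to fall among the honest values, and you correctly observe that the honest challengers' reports alone meet the verifier's thresholds, whereas the paper additionally invokes the Merkle-proof dispute mechanism to neutralize misreported $k_i$'s---a step your argument renders unnecessary for this direction.
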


\begin{proof}
When the prover is uncorrupted, it waits for $(n-f)\cdot k$ packets to generate the response. Even under withholding attacks, where corrupted challengers never send their packets, $(n-f)\cdot k$ packets generated by honest challengers will arrive at the prover before  $t_R = t_0 + b/\theta_0 + (n-f)kb/\bwp$. Then all uncorrupted challengers receive the response at the same time and output to the verifier. Assuming that the size of the response is negligible, we have the median RTT $\Delta'= (n-f)kb/\bwp$. %\himanshu{Should it be $\Delta'=(n-f)b/\bwp$? Also, what happened to the $k$? Should we replace $b$ with $kb$ everywhere?}.\peiyao{Yes, fixed.} 
If corrupted challengers try to misreport the number of packets received by the prover, claim the proof sent by the prover is incorrect, or even withhold the measurement results, the prover can send the genuine packets it has received from the challenger to the verifier together with the Merkle proof. The verifier will reconstruct the Merkle root from the submitted partial data and Merkle proof to solve disputes.  Thus, even under attacks, the total number of packets 
are no less than $(n-f)k$. %\himanshu{Should it be $(n-f)k$?}.
Consequently, the protocol will output 
\begin{align*}
    \bwp' = \frac{(n-f)k b \cdot (n-2f)}{\Delta' \cdot (n-f)} 
    %&\ge \frac{(n-f)  \cdot (n-2f)\cdot \bwp}{ (n-f)  \cdot (n-f)}\\
    \ge \frac{n-2f}{n-f}\bwp.
\end{align*}
%\himanshu{There are some details missing. We should mention the role of Merkle tree to avoid censoring.
%Also, we should comment on packet dropped rate. Are we assuming that no packets are dropped? I think we should assume an upper bound on the packet drop rate here, because if there are no packets dropped then the prover doesn't need to send $B_i$ to the challengers.} \peiyao{Yes, but this is about efficiency instead of security, maybe we can add one paragraph in section 2 to analyze overhead. I think packet drop rate is not a part of our current model, prover needs to send $B_i$ because prover may receive more than $(n-f)$ packets in some rounds and terminate before receiving all honest packets. }
%\himanshu{Is it? I thought it saves the censorship attack where the challengers refuse to corporate at the end. Perhaps you can explain in a remark how there is an efficient way of getting around the censorship attack 
%and how Merkle tree makes it efficient.}
\end{proof}

%\begin{theorem}
%When $f < n / 3$, $\ppob$ securely realizes $\fpob$ with accuracy $r = (n-2f) / (n-f)$.
%\end{theorem}

\paragraph{Remark.} {\bf (Adversarial threshold.)} Our protocol can tolerate up to a fraction $1/3$ of Byzantine challengers.
This threshold of $1/3$ arises from the requirement to ensure that a majority of $(n-f)$ RTT measurements are collected from uncorrupted challengers. This allows the verifier to terminate the collection responsively (or ``lazily'') when receiving enough reports without the requirement of a timer. However,
if the verifier has access to a timer with desired accuracy (roughly 100ms for us),
it can wait for a certain period (determined by maximal network delay and backhaul links transmission delays) to collect the measurements. In this case at least $n-f$ reports from honest challengers will be collected and the protocol is able to tolerate a fraction $1/2$ of Byzantine challengers.
\section{Protocol Implementation}
\label{measurement-technique}
In this section, we present the protocol implementation in a real system. 
Towards practicality, we discuss the factors that are not addressed in our theoretical modeling (\S\ref{subsec:measurement-technique}) which leads to the modifications in implementation to the basic form of the protocol (\S\ref{subsec:implementation}).

% As outlined in Secs.~\ref{sec:introduction} and~\ref{sec:related_work}, our problem is to measure prover backhaul which is more than bandwidth available for each challenger. As the first link of a single challenger is less, it alone cannot measure the higher prover backhaul using existing techniques that make use of packet-dispersion/packet-pair, etc mentioned in \S\ref{sec:related_work}. The techniques that measure link capacity for each link in a path can work in our problem setting~\cite{jacobsonPathchar, downey1999using, mah2000pchar, lai2000measuring, pasztor2002active}. However, as we outline in \S\ref{sec:pchar}, these techniques rely on precise timing measurement and for higher bandwidths of 100Mbps or more where the packet transmission time is of the order of tens of microseconds, such techniques do not work well.

% Our contribution is in devising a more accurate technique to measure higher available bandwidth for the prover by making use of multiple challengers with low bandwidths. We modify the technique employed by cprobe~\cite{carter1996dynamic, carter1996measuring} to work with multiple challengers.
\subsection{Practical considerations}
\label{subsec:measurement-technique}

%In \S\ref{sec:full_protocol} we assume all challengers are perfectly time synchronized, there is zero latency from the challenger to the prover, the bandwidth $\theta_i$ of each challenger is known, and computation time is ignored. However, this is not the case in practice, and these factors will affect the backhaul measurement accuracy. We discuss how we account for these factors in our PoB protocol implementation.
\paragraph{Challenger bandwidth.} In \S\ref{sec:full_protocol}, we assume each challenger can measure its spare bandwidth $\theta_i$ precisely. However, this bandwidth may be time-varying and
it will be difficult for the challenger to measure every time. 
We relax this requirement by allowing every challenger simply ensure that it has at least $\theta_0$ bandwidth available for the challenge. Here  %$\theta_0$ is chosen to be a small value, which depends on the number of uncorrupted challengers and the prover backhaul. Specifically, 
$\theta_0 = \bwp/(n-f)$ is the smallest value that satisfies the bandwidth condition in Eq.~\eqref{eq:bw}.
Each challenger will now send the challenge traffic at rate $\theta_0$.

\paragraph{Latency.} The key requirement of our technique is that the packets from each challenger reach the prover backhaul at the same time. The aggregation condition Eq.~\eqref{eq:queue} ensures this when there is no synchronization error or latency. However, in practice, a packet from the challenger $\mathtt{C}_i$ will take time $l_i$ to reach the prover, where $l_i$ is the one-way latency from challenger $\mathtt{C}_i$ to the prover. The value of $l_i$ can indeed vary for different challengers and to account for such varying latencies, we modify Eq.~\eqref{eq:queue} as
\begin{align*}
        t_{0} + \frac{b}{\theta_0} + l_0 = t_{11} + \frac{b}{\theta_0} + l_i  = \cdots = t_{n1} + \frac{b}{\theta_0} + l_n
\end{align*}
where $t_{i1}$ is the start time of challenger $\mathtt{C}_i$ to send the first packet. Note that $\theta_i$ is replaced by $\theta_0$ as in our implementation; challengers release packets at rate $\theta_0$. 

Likewise, the response packet from the prover will take time $l_i$ to reach challenger $\mathtt{C}_i$. Accordingly, $\Delta_i$ in Algorithm~\ref{alg:pob-measure} now changes to
%\begin{align*}
   $ \Delta_i = curTime_i - t_i - 2 \cdot l_i$,
%\end{align*}
where $curTime_i$ is the time when challenger $\mathtt{C}_i$ receives the response from the prover.
For measuring $l_i$, before the challenge starts, each challenger sends 20 ICMP ping packets to the prover and takes the average across these 20 packets as $RTT$. We set the value $l_i$ as $RTT/2$.

\paragraph{Packet drops.} We have assumed that all the $k$ packets from a challenger will reach the prover. However, since all the challengers send the packets simultaneously to the prover, there will be buffer overflow at the last link of the prover and some packets will be dropped. We use UDP protocol for the challenge packets, so dropped packets will not be retransmitted. Since we use packet count as the termination condition, packet dropping will prevent
  the challenge from being terminated. In our experiments, we find that we can compensate for the packet drops by asking challengers to send $1.1k$ packets, i.e., assuming a packet drop rate of 10\%, this guarantees that the prover receives $(n-f)k$ packets and terminates.
  %, while the terminating condition of the challenge is still that prover should receive $(n-f)k$ packets.
  %found that if we use a packet drop rate of 10\%, we can account for packet drops. This means that now  
  
\paragraph{Time synchronization.} We require that all the challengers are synchronized via Network Time Protocol (NTP)~\cite{ntp}. Note that NTP does not ensure perfect time synchronization, there can still be residual synchronization errors of the order of tens of milliseconds over the Internet~\cite{mills1989on}.

\paragraph{Computation overhead.} The use of cryptographic primitives like $Hash$ and $MerkleRoot$ (Algorithm~\ref{alg:pob-measure}) inevitably incurs computation overhead, which will delay the prover from sending responses to challengers and thereby add to an error in measurements. We detail empirical computation times of these primitives in \S\ref{subsection-perf-eval} as a function of the number of challengers and challenge duration for completeness.

 \subsection{Implementation}
 \label{subsec:implementation}
%\todo{how is challenger and prover implemented. Which hashing and signing schemes are used and why.}
 We implement challengers and the prover as UDP socket applications in C++ and each challenger conducts measurements by sending UDP packets to the prover. Details are described below.

\paragraph{Digital signatures.}
As outlined in Algorithm~\ref{alg:pob-measure}, a challenger needs to sign each packet. We leverage the Edwards-curve digital signature scheme, Ed25519~\cite{josefsson2017edwards} for signature generation and verification as its computation overhead is low. %A challenger generates a key pair ($pk_i, sk_i$) using Ed25519 and sends $pk_i$ to the prover beforehand.
Our measurement results indicate that if we use  challenge packets of 64 Bytes (the size of Ed25519 signature), measurement accuracy is affected especially if the challenger is connected over WiFi. For efficiency, modern generation WiFi uses packet aggregation where multiple packets from the network layer are combined into a single medium access control (MAC) layer packet of a larger size of up to 1 MB.~\cite{perahia2013next}. If we use smaller-sized 64 bytes UDP challenge packets, WiFi MAC aggregation is affected reducing the throughput i.e., $\theta_0$ for the challenger.  
To address this, we aggregate multiple signatures and send it as a single large packet. We use 1514 bytes challenge packets i.e., $b$ in Eq.~\eqref{eq:queue} is 1514 bytes (1472 byte UDP payload with a 42-byte header) which contain 23 different 64 byte signatures. We use the OpenSSL based implementation of Ed25519~\cite{openSSL_ed25519}. 

\paragraph{Hashing and verification.} As described in Algorithm~\ref{alg:pob-measure}, upon receiving the required number of total packets, the prover generates a hash for each challenger $\mathtt{C}_i$, i.e., $h_{1i} \gets Hash(M[i])$ where $M[i]$ is the set of all the signatures received from challenger $\mathtt{C}_i$. The prover then generates a $MerkleRoot$ of all the hashes from all the challengers. For generating the hash we use $sha256$ hash function via the implementation~\cite{sha256} and for generating the Merkle root, we use the following C++ open source implementation~\cite{merkletree}. The prover sends $h_{1i}$ and $MerkleRoot$ as response to the challenger $\mathtt{C}_i$. The response packet is a UDP packet with a payload of 64 bytes as it contains two 256-bit hashes. In the verification phase (Algorithm~\ref{alg:pob-verify}), the prover sends bitmap $B_i$  and Merkle proof $P_i$ to challenger $\mathtt{C}_i$, who then verifies the Merkle proof and sends RTT $\Delta_i$ and number of its packets received by the prover, to the verifier.

\paragraph{Precomputing the signatures.}  Signature generation  incurs computation time too and our benchmarking of OpenSSL implementation~\cite{openSSL_ed25519} of Ed25519 indicates that generating one signature of 64 bytes takes about 50-60 microseconds ($\mu s$) on a resource-constrained Linux system consisting of 1 GB of RAM and 1 CPU core. For each packet, a challenger has to generate 23 signatures which will incur a maximum time of $23*60 \approx 1.4 ms$. As the signature generation time is more than the packet transmission time of about $1.2ms$ even at $\theta_0 = 10Mbps$, in our implementation challengers precompute all the signatures before the challenge begins. This can be done after the challenger receives the challenge request and while measuring the ping latency $l_i$.

\paragraph{Benchmarking the technique.} Thus, making use of multiple challengers with the additional requirement of security introduces more sources of errors. Particularly, $l_i$ is not a constant and has some jitter. NTP synchronization can result in error of tens of milliseconds over the Internet. Computation overhead of hash and Merkle tree generation adds delay. Given these sources of error, we evaluate the accuracy as a function of challenger duration and the number of challengers. 

\section{Experimental Evaluation}
\label{sec:experiment}
In this section, we offer performance evaluations of our implementation and highlight how existing per-hop capacity estimation techniques fail to give accurate results for backhauls of 100 Mbps or more (\S\ref{subsection-perf-eval}). We stress test our experiments  under Byzantine attacks to evaluate the security of the protocol (\S\ref{subsection-security-eval}).
%In this section, we present the performance and security analysis of our protocol implementation. We also highlight how existing per-hop capacity estimation techniques like pathchar \cite{mah2000pchar} fail to give accurate results for prover backhauls of 100 Mbps or more.

% \subsection{Experimental Setup}

\paragraph{Experimental Setup.} Our setup consists of a diverse set of challengers in terms of computation capability and geographical location. We carry out experiments with a maximum of ten challengers. The details of the prover and different challengers are listed in Table~\ref{tbl:exp-setup}. Challengers 1-3 are connected to the Internet via WiFi links, while other challengers have a wired link.

\begin{table}[htb]
\centering
\begin{tabular}
{c|rr|c|c}
\hline
\textbf{} & \multicolumn{2}{c|}{\textbf{Compute Parameters}} & \textbf{Location} & \textbf{RTT} \\
% \cline{2-3}
\textbf{} & \textbf{RAM (GB)} & \textbf{CPU} & \textbf{} & \textbf{(ms)} \\
\hline  
% \textbf{Prover} & \textbf{1} & \textbf{1} & \textbf{AWS Mumbai} & \textbf{}  \\
Prover & 1 & 1 & AWS X &   \\
% \hline
% \textbf{Ch. 1-3} & \textbf{12-16} & \textbf{4-8} & \textbf{Bengaluru} & \textbf{25}  \\
Ch. 1-3 & 12-16 & 4-8 & Y & 25  \\
% \hline
% \textbf{Ch. 4-5} & \textbf{1} & \textbf{1} & \textbf{AWS Ohio} & \textbf{198}  \\
Ch. 4-5 & 1 & 1 & AWS Z & 198  \\
% \hline
% \textbf{Ch. 6-10} & \textbf{1} & \textbf{1} & \textbf{AWS Mumbai} & \textbf{1}  \\
Ch. 6-10 & 1 & 1 & AWS X & 1  \\
\hline
\end{tabular}
\caption{Experimental setup details for the prover and challengers. Location of the nodes are in different continents and are anonymized.}
\label{tbl:exp-setup}
\end{table}

%\paragraph{Experiment set-up}
%The experiment set-up consists of N different challengers that may be located at different sites (some challengers may be co-located as well). The challengers are connected to the prover via Internet. We assume that the capacity of the challenger link to the Internet is typically much less than capacity of prover's link to Internet that needs to be measured. The set-up is shown in Fig.~\ref{fig:set-up}. All the challengers use Network Time Protocol (NTP) for time synchronization.

%We have done experiments for N equal to 4. Two of the challengers are located at the same site, while remaining two are at different sites. To evaluate the accuracy of our approach, we carry out experiments in controlled set-up. We rate limit each challenger link to 30Mbps and the prover link to 90Mbps. Each challenger sends 2000 packets of 1514 bytes in a challenge at a predetermined start time. The prover is able to reconstruct secret when it receives $3/4^{th}\ of\ 2000*4 = 6000$ packets.  

% \begin{figure*}
%     \centering
%     \includegraphics[width=0.8\textwidth]{}
%     \caption{Experimental set-up.}
%     \label{fig:set-up}
% \end{figure*}

% \subsection{Performance Evaluation}
\subsection{Performance Evaluation} 
\label{subsection-perf-eval}

First, we benchmark the accuracy of our measurements by  carrying out experiments without corrupted challengers. We evaluate the performance in the presence of corrupted challengers later in \S\ref{subsection-security-eval}.

% \subsubsection{Measurement accuracy}

\paragraph{Measurement accuracy.}
To study how accuracy varies with challenge duration and the number of challengers, we conduct experiments by adjusting the number of selected challengers from 4 to 10 and challenge duration from 25 ms to 200 ms for a prover with backhaul bandwidth ($\bwp$) of 250 Mbps. %Recall that the overall pool of challengers can be much larger, the number of challengers $4$ to $10$ here represent the challengers selected for each challenge to aggregate traffic appropriately. 

Challenge duration is the time required to transmit the required amount of packets i.e., $(n-f)k$ packets through the prover backhaul. Individual challengers will take longer to complete the challenge due to their latency, $l_i$, and the fact that they send $1.1k$ packets to account for packet drops.
%The time taken by individual challenger to complete the challenge will be more depending on its latency, $l_i$ and also since it sends $1.1k$ packets to account for packet drops.
%To enforce the prover backhaul bandwidth of  in our experiment, 
We rate-limit the prover backhaul to 250 Mbps using the Linux utility tc~\cite{tc-linux-manual} and set the bandwidth of each challenger ($\theta_0$) as $\bwp/n$, where $n$ is the number of challengers.

\begin{figure}[htb]
    \centering
    \includegraphics[width=0.48\textwidth,trim={2.0cm 4.2cm 2.1cm 4.7cm},clip]{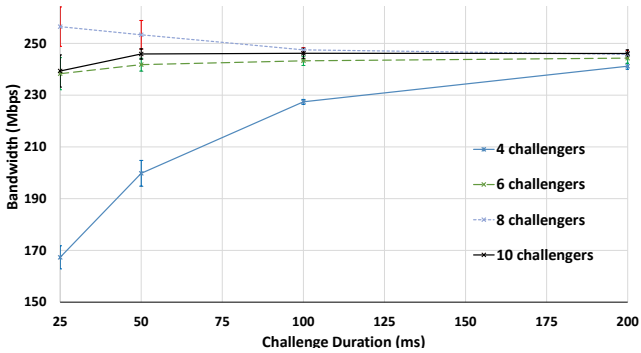}
    % \negspace{-0.65cm}
    \caption{Backhaul measured by our technique for different challenge durations. Error bars $\approx$ std. deviation.}
    \label{fig:numofChallengers}
    % \negspace{-0.5cm}
\end{figure}

Fig.~\ref{fig:numofChallengers} shows the backhaul measured by our technique for varying number of challengers and challenge durations as 25ms, 50ms, 100ms and 200ms. For each challenge duration, ten experiments are carried out. We plot the average and standard deviation for ten experiments in Fig.~\ref{fig:numofChallengers}.

As can be seen from Fig.~\ref{fig:numofChallengers}, with the number of challengers set to 4, the measured backhaul is only about 167 Mbps for 25ms challenge duration, but when the challenge duration is increased to 200 ms, the measured backhaul increases to about 241 Mbps with an error of about 4\%. On the other hand, when the number of challengers is increased to 6 or more, the measured backhaul has an error of less than 5\%, even for 25 ms challenge. However, the standard deviation for 25 ms and 50 ms experiments is higher. The measurement accuracy increases and the standard deviation decreases, if the challenge duration is increased to 100 ms or more. For 100ms, we observe an error of less than 5\% for six or more challengers. 

\begin{figure}[htb]
    \centering
    \includegraphics[width=0.48\textwidth,trim={2.0cm 4.8cm 2.0cm 4.7cm},clip]{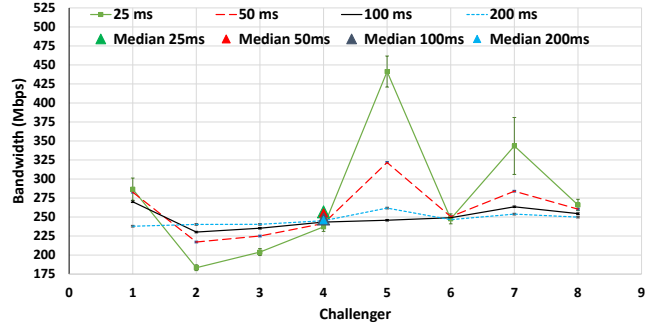}
    % \negspace{-0.65cm}
    \caption{Backhaul measured by each challenger ($n=8$) for different challenge durations. Error bars $\approx$ std. deviation. }
    \label{fig:exp1-duration}
    % \negspace{-0.5cm}
\end{figure}

We look at how measurement accuracy is affected as a function of the challenge duration for the case of $n=8$. Fig.~\ref{fig:exp1-duration} shows the backhaul measured by each of eight challengers, as an average across ten experiments for different challenge durations. As can be seen from Fig.~\ref{fig:exp1-duration}, the backhaul measured by individual challenger shows higher error when challenge duration is 25ms or 50ms. For example, the backhaul measured by challenger 5 is about 440 Mbps and 320 Mbps for 25 ms and 50 ms duration. However, the error decreases when challenge duration is increased to 100 ms or more. Note that standard deviation across 10 experiments for each challenger also decreases as the challenger duration is increased.

\paragraph{Sources of error.} Interestingly it can be seen from Fig.~\ref{fig:exp1-duration} that some challengers measurement of prover backhaul that is higher than the actual value of 250 Mbps. This is due to errors affecting measurement accuracy (see \S\ref{measurement-technique}) such as time synchronization and jitters in latency. We observe that due to these errors, there is a time difference of 20-30 ms between the first packet from the first and the last challenger reaching the prover backhaul. To compensate for the packet drop rate of 10\% (see \S\ref{measurement-technique}), each challenger sends 10\% more data. So, the challengers that start late might receive the response from the prover before they finish sending their share of challenge packets, if sufficient challenge packets have been received by the prover from the challengers that start early. Such late starting challenger's backhaul estimate may be higher than the actual value. However, the median evaluation at the final step, which is primarily designed for security, also provides robustness against such outliers. Consequently, our measurement accuracy increases as we increase the number of challengers. 

The computation overhead of hash and Merkle tree generation also adds to the measurement error. We observe that the computation overhead for the case of 4 challengers for 25 ms challenge duration is about 500$\mu s$, while for 10 challengers for 200 ms challenge duration is about 3ms.

\paragraph{Amount of data.} For the PoB protocol designed to handle the packet drop rate of 10\%, the total amount of data required for different challenge duration for prover backhaul of 250 Mbps is given in Table~\ref{tbl:total-data}. 

\begin{table}[htb]
\centering
\begin{tabular}
{|c|c|c|c|c|}
\hline
% \textbf{Expt. Duration} & \textbf{25ms} & \textbf{50ms} & \textbf{100ms} & \textbf{200ms} \\
% \hline
% \textbf{Data (MB)} & \textbf{0.86} & \textbf{1.71} & \textbf{3.44} & \textbf{6.88} \\
\textbf{Expt. Duration} & 25ms & 50ms & 100ms & 200ms \\
\hline
\textbf{Data (MB)} & 0.86 & 1.71 & 3.44 & 6.88 \\
% Data (MB) & 0.86 & 1.71 & 3.44 & 6.88 \\
\hline
\end{tabular}
\caption{Amount of challenge data required.}
\label{tbl:total-data}
\end{table}

As seen from Fig.~\ref{fig:exp1-duration}, for challenges with 100-ms duration we get a good accuracy for each challenger. Thus, our results show that our technique can measure 250 Mbps backhaul in 100 ms with about 3.5 MB of data and an error of less than 5\%, when 6 or more uncorrupted challengers are involved.

\paragraph{Comparison with a single challenger.} With a single challenger that has a bandwidth of 250 Mbps, we could measure prover backhaul of 250 Mbps with less than 2\% error with challenge duration being only 10 ms and the amount of data required is about 345 KB. 
%\himanshu{What bandwidth are we measuring here? How can we complete in 10ms -- what about internet delays etc?We should clarify that more data in the first case is due to longer duration}
%%\vishal{10ms is challenge duration that accounts latency correction, i.e., it is the time for which packets are sent over the prover backhaul. May be we need to clarify what is challenge duration?}
%% Ignoring these comments since this was a minor point.
Multichallenger technique requires larger challenge duration due to the aforementioned  errors. As the duration of the challenge is longer, the amount of data used correspondingly increases. But the primary benefit of multichallenger technique is that each challenger requires much smaller bandwidth. With ten challengers, each challenger requires a bandwidth of only 25 Mbps to measure prover backhaul of 250 Mbps.

% \subsubsection{Accuracy for larger prover backhauls}
\paragraph{Accuracy for larger prover backhauls.} The experimental results of accuracy for larger prover backhauls (500 Mbps to 1000 Gbps) with 10 challengers are tabulated in Table~\ref{tbl:larger-backhaul}. We observe that the measurement error grows as prover backhaul increases; however even for prover backhaul of 1000 Mbps, the measurement error is less than 8\%.

% \begin{table}[htb]
% \centering
% \begin{tabular}
% {|c|c|c|}
% \hline
% \textbf{Backhaul} & \textbf{Measured BW} & \textbf{Error (\%)} \\
% \hline
% \textbf{500Mbps} & \textbf{474.7Mbps} & \textbf{5.1\%}  \\
% \hline
% \textbf{750Mbps} & \textbf{705.4Mbps} & \textbf{5.9\%}  \\
% \hline
% \textbf{1000Mbps} & \textbf{921.4Mbps} & \textbf{7.9\%}  \\
% \hline
% \end{tabular}
% \caption{\nikita{Remove this table} Measured bandwidth for larger prover backhauls of 500Mbps-1000Mbps}
% \label{tbl:larger-backhaul-old}
% \end{table}

\begin{table}[htb]
    \centering
    \begin{tabular}{|c|c|c|c|}
        \hline
        \textbf{Backhaul (Mbps)} & 500 & 750 & 1000  \\
        \hline
        % \textbf{Measured BW (Mbps)} & \textbf{474.7} & \textbf{705.4} & \textbf{921.4}  \\
        \textbf{Measured BW (Mbps)} & 474.7 & 705.4 & 921.4  \\
        \hline
        % \textbf{Error (\%)} & \textbf{5.1} & \textbf{5.9} & \textbf{7.9}  \\
        \textbf{Error (\%)} & 5.1 & 5.9 & 7.9  \\
        \hline
    \end{tabular}
    \caption{Measured bandwidth for larger prover backhauls.}
    \label{tbl:larger-backhaul}
\end{table}

% \begin{table}[htb]
% \centering
% \begin{tabular}
% {|c|c|}
% \hline
% \textbf{Backhaul} & \textbf{Overhead} \\
% \hline
% \textbf{500Mbps} & \textbf{4.6ms}  \\
% \hline
% \textbf{750Mbps} & \textbf{7.3ms}  \\
% \hline
% \textbf{1000Mbps} & \textbf{10.2ms}  \\
% \hline
% \end{tabular}
% \caption{\nikita{Remove this table} Computation overhead for hash and merkle tree generation}
% \label{tbl:overheads-old}
% \end{table}

\begin{table}[htb]
    \centering
    \begin{tabular}{|c|c|c|c|}
        \hline
        \textbf{Backhaul (Mbps)} & 500 & 750 & 1000  \\
        \hline
        % \textbf{Overhead (ms)} & \textbf{4.6} & \textbf{7.3} & \textbf{10.2}
        % \\
        \textbf{Overhead (ms)} & 4.6 & 7.3 & 10.2 \\
        \hline
    \end{tabular}
    \caption{Computation overhead}
    \label{tbl:overheads}
\end{table}

One reason for higher measurement error as prover backhaul increases is the increasing computation time for hash and Merkle tree construction. Table~\ref{tbl:overheads} shows the computation overhead for various prover backhauls. The computation overhead for 1000 Mbps is about 10 ms which is 10\% of the challenge duration of 100 ms. These experiments suggest that as the prover backhaul increases, the computation overhead increases. So for even larger prover backhaul than 1000 Mbps, the challenge duration should be increased.

% \subsubsection{Effect of cross traffic}
\label{subsec:cross-traffic}
\paragraph{Effect of cross traffic.} Our PoB protocol terminates when $(n-f)k$ packets are received by the prover. The number of packets $k$ sent by each challenger is determined by the prover backhaul and challenge duration. However, if there is cross-traffic, the available bandwidth at the prover will be reduced and the challenge packets may experience a larger drop rate than 10\% that we assume for our experiments. In this situation the experiment may not terminate.

We propose a modification to measure the available bandwidth in the presence of cross-traffic, 
up to a fixed accuracy $\delta$. The protocol repeats the basic PoB protocol, but instead
of verifying $\bwp$, it verifies iteratively $\bwp^{(1)}=\delta, \bwp^{(2)}=2\delta, ..., \bwp^{(\ell)}=\ell \delta$ and so on till $\bwp$. In more detail, we proceed as follows.
\begin{enumerate}
    \item At step $i$, execute multichallenger PoB protocol with $\bwp^{(i)} =i \delta$. 
    Note that each challenger must release challenge traffic at rate $\theta_0= \bwp^{(i)}/n$ at this step. 
    \item Each challenger sets a timeout of $5\times$ (challenge duration). If the response from the prover is not received during this period, the challenger declares {\tt not terminate}. 
    If majority of the challengers declare {\tt not terminate}, we say that the protocol does not terminate.
    \item If the protocol for the $i$th step terminates, increment $i\leftarrow i+1$ and repeat the steps above.
    \item Else if the protocol for the $i$th step does not terminate, output the bandwidth obtained in the previous execution of the PoB protocol.
\end{enumerate}
Using the approach above, we carried out experiments to measure available bandwidth in the presence of different amounts of cross-traffic. The backhaul of the prover is set to 250 Mbps and the number of challengers is 10.

\begin{table}[htb]
    \centering
    \begin{tabular}{|c|c|c|c|}
    \hline
    % \textbf{Available BW (Mbps)} & \textbf{220} & \textbf{140} & \textbf{90}  \\
     \textbf{Available BW (Mbps)} & 220 & 140 & 90  \\
    \hline
    % \textbf{Measured BW (Mbps)} & \textbf{219.65} & \textbf{144.63} & \textbf{104.15}  \\
     \textbf{Measured BW (Mbps)} & 219.65 & 144.63 & 104.15  \\
    \hline
    % \textbf{$\theta_0$ (Mbps)} & \textbf{30} & \textbf{20} & \textbf{15}  \\
    % \hline
    \end{tabular}
    \caption{Measured bandwidth in the presence of cross traffic.}
    \label{tbl:cross-traffic}
\end{table}

Table~\ref{tbl:cross-traffic} summarizes the results. $\theta_{P,0}$ is set to 40 Mbps and $\delta$ to 20 Mbps. The measured bandwidths are close to available bandwidths, except for 90 Mbps. Probably for the case of available bandwidth as 90 Mbps, the presence of challenge traffic alters the available bandwidth. 

% \subsubsection{Comparison with pathchar}
\label{sec:pchar}
\paragraph{Comparison with pathchar.} As we outlined earlier, the performance of pathchar depends on how robust are minimum delay 
estimates over the Internet and how long will it take for us to get a robust estimate. 
Thus, to evaluate the performance of pathchar,
we measure the RTT to the prover node using ping for 15 different packet sizes in multiples of 100 Bytes, starting from 100 Bytes and ending at 1500 Bytes. We ran the experiment five times and took 500 measurements for each packet size. Pathchar~\cite{jacobsonPathchar, downey1999using} suggests taking the minimum RTT for each packet size and fitting the linear least squares line to the data. 

\begin{figure}[htb]
    \centering
    \includegraphics[width=0.45\textwidth]{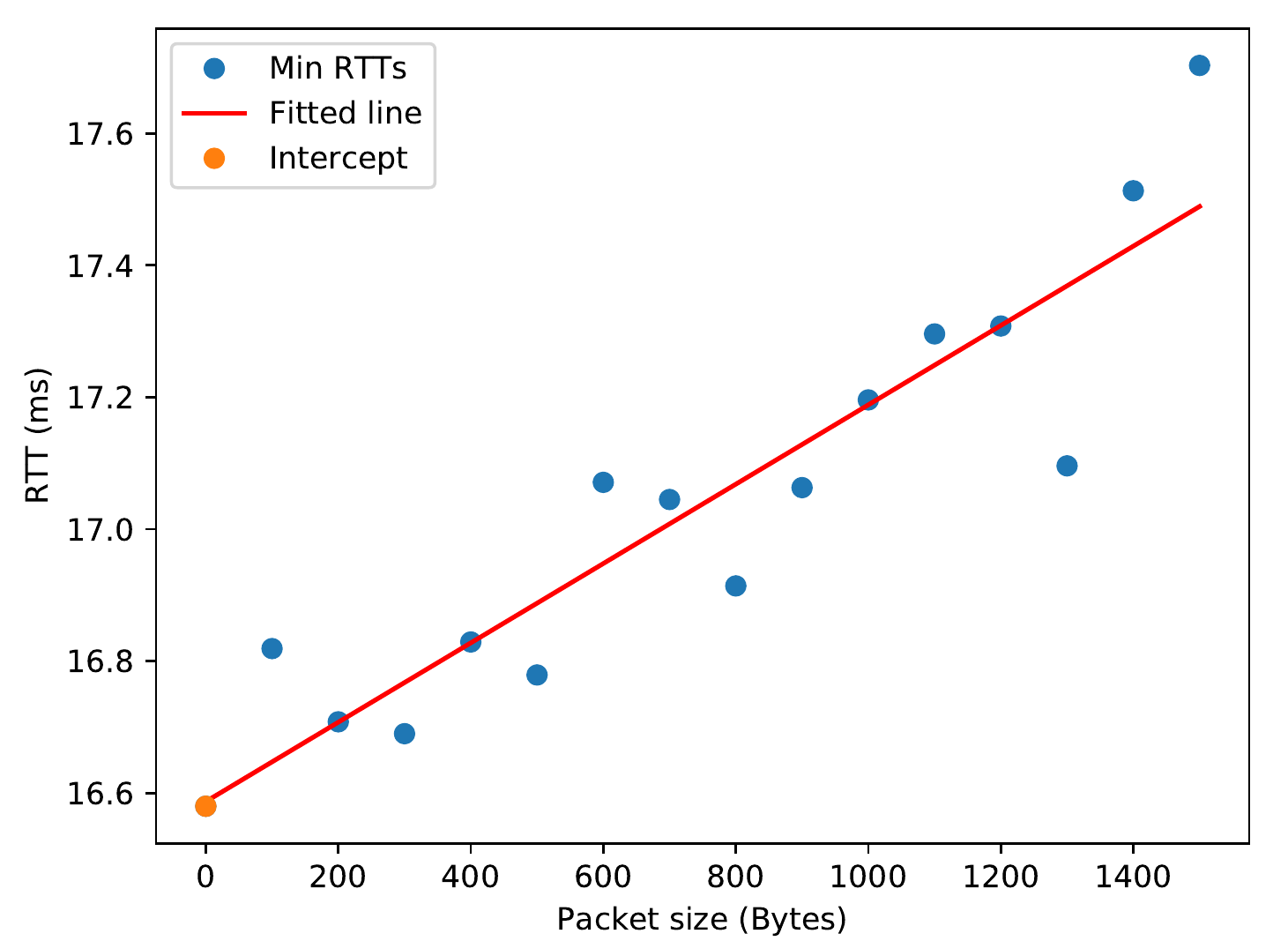}
    \caption{Minimum round trip time to the prover versus packet size in Bytes. The line shows the linear least squares fit.}
    \label{fig:ping-measurement}
\end{figure}

\begin{table}[htb]
    \centering
    \begin{tabular}{|c|c|c|c|c|c|}
    \hline
    % \textbf{Expt.} & \textbf{1} & \textbf{2} & \textbf{3} & \textbf{4} & \textbf{5} \\
    \textbf{Expt.} & 1 & 2 & 3 & 4 & 5 \\
    \hline
    % \textbf{RTT} & \textbf{16.589} & \textbf{16.511} & \textbf{16.635} & \textbf{16.565} & \textbf{16.500} \\
    \textbf{RTT} & 16.589 & 16.511 & 16.635 & 16.565 & 16.500 \\
    \hline
    \end{tabular}
    \caption{Linear least squares line's intercept values for each experiment's dataset. RTT is in $ms$.}
    \label{tab:ping-analysis-data}
\end{table}

Figure \ref{fig:ping-measurement} shows the minimum RTT (ms) versus packet size in bytes and the fitted line for the first experiment. Table \ref{tab:ping-analysis-data} shows the y-intercepts for five experiments; note that the y-intercept represents latency. We can see that the y-intercepts have a difference of 50-100 ms. Thus, we can say that the jitters experienced over the internet is not negligible;
in particular, we cannot estimate the minimum latency  below accurancy of 50-100 microseconds. Consequently, it is not feasible to use pathchar to measure 100 Mbps or  higher bandwidth.

% \begin{figure}
%     \centering
%     \begin{minipage}[b]{.5\textwidth}
%     \centering
%     % \subfloat
%     % []
%     {\includegraphics[width=\textwidth]{}}
%     \end{minipage}
%     \caption{Minimum round trip time to the prover versus packet size in Bytes. The line shows the linear least squares fit to the data. The orange data point shows the intercept. \peiyao{Can we make this figure pdf format?}}
%     \label{fig:ping-measurement-old}
% \end{figure}

% \subsubsection{Measurements for shuffle protocol}

% \vishal{\textbf{Outline}

% Give basic results for accuracy with shuffle protocol

% }

\subsection{Security Evaluation}
%\paragraph{Security Evaluation.}
\label{subsection-security-eval}
We carry out experiments to study how measurement results are effected in the presence of malicious challengers. We choose total number of challengers as $n=10$ and malicious challengers $f=2$. We carry out measurements for two different prover backhauls of 100 Mbps and 250 Mbps, with a challenge duration of 100 ms.

\begin{figure}[htb]
    \centering
    \includegraphics[width=0.48\textwidth,trim={2.1cm 4.9cm 2.1cm 4.9cm},clip]{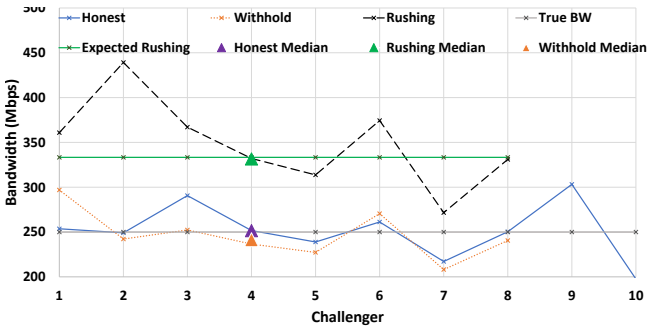}
    % \negspace{-0.65cm}
    \caption{Backhaul measured by each challenger in case of withholding and rushing attack. Prover backhaul is 250 Mbps.}
    \label{fig:security-250Mbps}
    % \negspace{-0.5cm}
\end{figure}

Fig.~\ref{fig:security-250Mbps} shows the average bandwidth measured by each challenger across ten experiment runs for the case when prover backhaul is 250 Mbps. As can be seen from Fig.~\ref{fig:security-250Mbps}, the measured backhaul in the case when all challengers are honest (honest median in Fig.~\ref{fig:security-250Mbps}) is 251.5 Mbps, while the measured backhaul in the case of withholding attack (Withhold Median in Fig.~\ref{fig:security-250Mbps}) is 241.4 Mbps. The expected measured backhaul in case of withholding attack is 250 Mbps. So the measured accuracy in case of withholding attack is within 4\%. Note that this is the accuracy of the measurement technique. Our PoB protocol will apply a correction factor $\alpha = (n-2f)/(n-f)$ (Algorithm~\ref{alg:pob-verify}) and output the guaranteed bandwidth for the prover as $241.4\alpha \approx 181$ Mbps which is about 28\% less than the prover backhaul of 250 Mbps. 

In case of a rushing attack, the measured bandwidth is inflated to $250 / \alpha \approx 333$ Mbps. The measured backhaul (rushing median in Fig.~\ref{fig:security-250Mbps}) 331.5 Mbps matches our theoretical prediction. The output of the protocol in this case will be $331.5\alpha \approx 249$ Mbps, which is only 1 Mbps less than the prover backhaul.

Note that our security guarantees require us to curtail bandwidth inflation. Indeed, 
we can observe that even under a rushing attack, the guaranteed bandwidth does not exceed the actual bandwidth. This is enabled by multiplying by a shrinkage factor to compensate for adversarial
challengers trying to help the prover to claim an inflated bandwidth.
However, this comes at the cost of lower guaranteed bandwidth even when all challengers are reporting honestly. 
After repeating the experiment for a backhaul of 100 Mbps, the results stay similar, validating  our theoretical predictions.

% \begin{figure}[htb]
%     \centering
%     \includegraphics[width=0.48\textwidth,trim={2.1cm 5.0cm 2.1cm 4.9cm},clip]{}
%     % \negspace{-0.65cm}
%     \caption{Backhaul measured by each challenger in case of withholding and rushing attack. Prover backhaul is 100Mbps}
%     \label{fig:security-100Mbps}
%     % \negspace{-0.5cm}
% \end{figure}

% \begin{figure}[H]
%     \centering
%     \includegraphics[width=0.5\textwidth]{}
%     \caption{Effect of varying duration on accuracy}
%     \label{fig:variable-duration}
% \end{figure}

% \begin{figure}[H]
%     \centering
%     \includegraphics[width=0.5\textwidth]{}
%     \caption{Effect of withholding attack on accuracy}
%     \label{fig:withholding}
% \end{figure}

% \begin{figure}[H]
%     \centering
%     \includegraphics[width=0.5\textwidth]{}
%     \caption{Effect of rushing attack on accuracy}
%     \label{fig:rushing}
% \end{figure}

%\begin{enumerate}
%    \item Rushing attack 
%    \item Withholding attack
%    \item False bandwidth report by challenger
%\end{enumerate}

\section{Conclusion and Discussion}
\label{sec:discussion}
\paragraph{Summary}. Trustfree telemetry is a central problem in decentralized networks. Our  Proof of Backhaul protocol addresses a core requirement by providing a secure and accurate backhaul bandwidth measurement service for wireless access points while also allowing open participation. The protocol is operated by a group of challengers, whose latency and bandwidth can be ordinary, with the goal of measuring a prover hotspot who may have a high-bandwidth backhaul link. We have established a trust model for the PoB problem, designed precise specifications of the PoB protocol, and tested a high-performance, low-overhead implementation.

\paragraph{Improving Accuracy}. Our protocol guarantees soundness and completeness of backhaul measurements with a reasonable accuracy in the presence of Byzantine parties. The accuracy ratio $(1-2\beta)/(1-\beta)$ is determined by the Byzantine fraction due to a correction made for an unavoidable rushing attack -- corrupted challengers can always rush their packets through an external high-bandwidth link to lower RTT and inflate backhaul bandwidth to be measured. However, such backdoor links may incur substantial costs in practice, necessitating a more relaxed threat model and a family of extended protocols. Without rushing links, we equip PoB protocols with a shuffle phase where a pair of challengers are asked to jointly sign packets %(see Appendix~\ref{sec:shuffle})
. This mechanism improves accuracy by making information sharing attack harder in a probabilistic manner, with a cost of higher communication overhead for verification. Designing a secure and efficient proof structure for such a shuffle protocol is an active area of research. 

\paragraph{Cross-traffic.} 
In our proposed method for handling cross-traffic in \S\ref{subsec:cross-traffic}, we run experiments for increasing values of bandwidth below the claimed link capacity. This requires fresh data to be sent for each value and increases the amount of data needed. To reduce the data consumption, a naive approach could be that the prover replies to the challengers with the number of packets received in
a fixed duration.
An alternative approach is to send intermediate responses when appropriate amounts of data are received. Both approaches cannot guarantee a fixed accuracy for different available bandwidths. Designing a protocol which is more data efficient and has such guarantees is an open problem.

\bibliographystyle{plain}
\bibliography{sample}
%%%%%%%%%%%%%%%%%%%%%%%%%%%%%%%%%%%%%%%%%%%%%%%%%%%%%%%%%%%%%%%%%%%%%%%%%%%%%%%%
%\appendix
%\input{archive/08_extend}

\end{document}